\newcounter{rqrmnt}
\newenvironment{requirement}[1][]{\refstepcounter{rqrmnt}\par\smallskip
   \noindent\textbf{R\therqrmnt. #1} \rmfamily}{\smallskip \par} 
\newtheorem{hypothesis*}{Hypothesis}
\title{A Correct by Construction Fault Tolerant Voter for Input Selection of a Control System} 
\author{Arif {Ali AP}}{Department of Computer Science and Engineering, Indian Institute of Technology Palakkad, India}{aparif@gmail.com}{ } {} 
\author{Jasine Babu}{Department of Computer Science and Engineering, Indian Institute of Technology Palakkad, India}{jasine@iitpkd.ac.in}{ } {}
\author{Deepa Sara John}{ISRO Inertial Systems Unit, Indian Space Research Organization, Kerala, India}{deepa\_john@vssc.gov.in}{ } {}
\authorrunning{A.~Ali~AP and J.~Babu and D.~S.~John} 
\keywords{Fault Tolerant System Design, Formal Verification, Correct by Construction, Input Selection, Interactive Theorem Proving} 
\begin{document}

\maketitle

\begin{abstract}
Safety-critical systems use redundant input units to improve their reliability and fault tolerance.  A voting logic is then used to select a reliable input from the redundant sources.  A fault detection and isolation rules help in selecting input units that can participate in voting. This work deals with the formal requirement formulation, design, verification and synthesis of a generic voting unit for an $N$-modular redundant measurement system used for control applications in avionics systems. The work follows a correct-by-construction approach, using the Rocq theorem prover.
\end{abstract}
\section{Introduction}
Redundancy management is used in safety-critical systems to improve their reliability and fault tolerance~\cite{divito1990formal,goldberg1984,wakerly1978synchronization}. Examples of some well known domains of safety critical applications include avionics control, automated vehicles and nuclear reactor control systems ~\cite{Knight2002Safety,Kassab2014Anovel}. 
Since the reliability guarantee provided by testing is insufficient, such systems are suitable candidates for formal verification~\cite{HandbookofModelchecking}. From the late 1980s, usage of formal verification for the design of fault-tolerant systems~\cite{Aviziens1976Fault} was supported by agencies such as NASA, which work in the domain of safety-critical applications~\cite{rushby1989formal,divito1990formal,Rushby1993Formal}. Owre et al.~\cite{Owre1995Formal} mention the formal verification of fault-tolerant architectures as a major motivation for the development of the PVS theorem prover.  Interactive consistency and convergence~\cite{lincoln1993formal}, clock synchronization~\cite{rushby1994formally,miner1993verification}, and redundancy management~\cite{divito1990formal} are some well-studied subproblems of fault-tolerant system design for which formal verification has been attempted. There are also works on the verification of timing specifications of fault-tolerant systems~\cite{rushby1999systematic,rushby2002overview,jones2017modular,saha2016formal}. Most of the works on formal verification of fault tolerant systems~\cite{lincoln1993formal,rushby1994formally,dajani2003formal,Pike2006Anote,krishnan2024formalclk,krishnan2024formal} discuss only verification of the formal model and not the code synthesis. There is a recent work by Rahli et al.~\cite{rahli2018velisarios} that discusses Rocq based synthesis of a code implementing a Byzantine fault tolerant  state machine replication protocol based on message passing.

Our work deals with the formal modelling, verification and construction of an input selection unit for a synchronous $N$-modular redundant measurement system used for control applications in avionics systems. Though the redundant input units work synchronously and measure the same physical quantity, there can be differences in the measured value due to the noises in the measurement system (see Section~\ref{sec:unitfaultmodel}). Moreover, some of these input units may have transient faults due to factors such as cosmic radiations~\cite{Softerrorbook2011} and vibrations~\cite{Hitt2006Fault}, or permanent faults due to hardware failures~\cite{Davis1987Davis}. Hence, it becomes necessary to apply an input selection logic to the values obtained from the redundant input units and produce a single output value corresponding to the physical quantity. In this work, we follow a correct-by-construction approach for synthesis of the input selection unit which uses a noise-resilient voting algorithm for the selection. Henceforth, we refer to the input selection unit as the \textit{voter unit}.

The interesting application scenarios are when the system is running continuously, where in every cycle of operation, each input unit produces a measurement. In many real scenarios, the incremental change in the physical quantity being measured (such as distance to an object, angular velocity, atmospheric pressure, etc. measured in an avionics control unit) from one cycle to the next is small and smooth.  In each cycle, the voter unit has to select one of the input units as a \textit{prime unit}. The selection should ensure a reliable output value by discarding the measurements having transient errors. In addition, permanently faulty units should be identified and isolated, so that they do not corrupt the feed to the controller. To do this effectively, the input selection of each cycle has to also take into account some information from the past cycles. If the measurement domain is $\mathcal{D}$, the number of input units is $N$ and the information based on the past $t$ cycles is used for the calculations, then the voting unit essentially calculates a function $f: \mathcal{D}^{tN} \mapsto \mathcal{D}$ in every cycle. In a more general situation, some self-reported parameters from the input units are also required for the decision making. It is also desired that, along with the selected measurement value, some reliability parameters of the output are also supplied to the controller. The reader may refer to Figure~\ref{fig:syssetup} which depicts an illustration of such a system. The formal verification of such systems becomes challenging due to the large size of the underlying state space~\cite[Section 1.3.1]{HandbookofModelchecking}.

\begin{figure}
\centering
 \resizebox{\textwidth}{!}{\begin{tikzpicture}[node distance=2.5cm, auto]
\tikzstyle{block} = [rectangle, draw, fill=gray!20, text centered, minimum height=3em, minimum width=3.2cm, minimum height=1.75cm]
\tikzstyle{arrow} = [thick,->,>=Stealth]

\node (source) [cloud, cloud puffs = 10, draw, minimum width = 3.5cm, minimum height = 2cm, fill = gray!10] at (-7,-1){\LARGE Data};
\node (source2) [cloud, cloud puffs = 10, draw, minimum width = 3.5cm, minimum height = 2cm, fill = gray!10] at (-11,-1){\LARGE Data};
\node (input1) [block] at (-2.4,1) {\LARGE Input Unit 1};
\node (inputn) [block] at (-2.4,-4.2) {\LARGE Input Unit N};
\node (voter) [block] at (7.1,-2) {\LARGE Input Selection Unit};
\node (controller) [block] at (18,-2) {\LARGE Controller};


\draw [arrow] (source) -- (input1.west);
\draw [arrow] (source) -- (inputn.west);
\draw [arrow] (source) -- (-3.4,-1);

\draw [arrow] (input1) -- (voter) 
node[pos=0.4, sloped]{\parbox{4cm}{\LARGE(unit status info,\\ measured value)}}(voter);


\draw [arrow] (inputn) -- node[pos=0.4, sloped] 
{\parbox{4cm}{\LARGE\raggedleft(unit status info,\\ measured value)}}(voter.west);

\draw [arrow] (voter) -- node[pos=0.5]
{\parbox{5.5cm}{\LARGE(output value,\\
reliability parameters)}} (controller);

\draw [arrow] (voter.east) -- (controller.west);

\path (input1) -- (inputn) node [black, font=\Huge, pos=0.5, sloped] {~~$\dots$ $\dots$~~};

\draw [arrow]  (voter) -- (11,-2.5);
\draw [arrow]  (11,-2.5) -- (11, -5);
\draw [arrow]  (11, -5) -- node[above] {\LARGE preserved data} (3.3, -5);
\draw [arrow]  (3.3, -5) -- (3.3,-2.8);
\draw [arrow]  (3.3, -2.8) -- (voter);

\end{tikzpicture}}
 \caption{In each cycle, the input selection unit has to select an appropriate measured value and feed it to the controller, along with some parameters indicating the reliability of the output. Between consecutive cycles, the change in the physical quantity being measured is assumed to be small. Preserved data stores some information from past cycles and it is used for fault handling and output generation. The output value could be retained from the previous cycle, when the current cycle measurement is not reliable.}
 \label{fig:syssetup}
\end{figure}
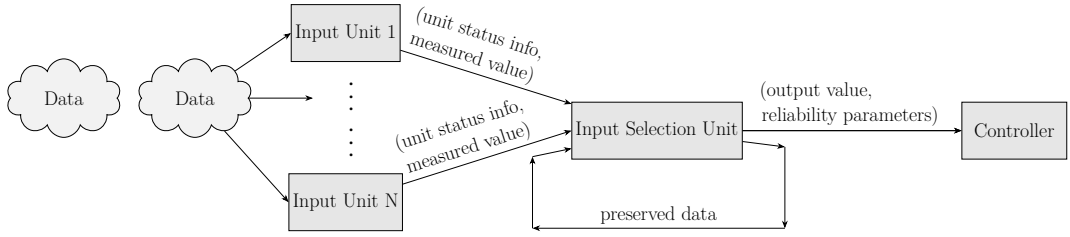

One aspect of transient resilience of the system is to ensure that measurements with transient errors are discarded, without getting fed to the controller. On the other hand, measurements are usually taken from redundant units that are physically separated and there could be minor deviations in the measurements made by them. Hence, it is desirable to avoid \textit{frequent switching} of the prime unit due to transient failures, to maintain better stability. Although these two aspects of transient resilience may seem conflicting, the assumption that the  incremental variations in the measured quantity between consecutive cycles is small, makes it possible to strike a balance.  If the prime unit selected for the previous cycle has a transient error in the current cycle, the algorithm can retain the output value of the previous cycle without switching the prime unit, as long as the age of the measurement is within the allowed limits. If an input unit is showing failure for many consecutive cycles beyond a \textit{persistence limit}, then it has to be permanently isolated.
These requirements make the design of an input selection unit for a controller different from a typical voting algorithm.

As explained by Butler~\cite{Butler1993Formal} and Miller et al.~\cite{Miller2003Proving}, a major effort in the formal verification of a safety critical system goes into unambiguously capturing the requirements of the system using formal logic.  A major contribution of this work is the formulation of a consistent set of requirements relevant to the application context presented above.  We demonstrate the theoretical limits of achievable soundness and completeness. Our voter unit algorithm is proven to match the theoretical limits. 

If the number of input units without a permanent fault falls below a certain threshold, then it is well known that fault identification cannot achieve completeness~\cite{wakerly1978synchronization}. However, by fine-tuning the fault identification algorithm, in certain situations it is possible to identify with certainty, a subset of units as non-faulty and another subset of units as faulty. In such cases as well, we derive the theoretical limits of fault identification and our algorithm matches these limits. Such a finer fault identification will help in prolonging the duration of reliable operation of the voter unit, by avoiding  mis-classification of a healthy input unit as permanently faulty. 

When the algorithm is designed to prolong the duration of operation of the voter unit this way, there could be certain boundary cases wherein the prime unit of the previous cycle got isolated due to a permanent fault and it is not theoretically possible to identify with certainty even one unit as non-faulty to assign as the new prime unit. In such situations, if the number of units without a permanent fault is above \textit{the minimum threshold for continuing the operation} of the voter unit, then for a short while our algorithm will change the state of the voter unit as un\_id (unidentifiable), in anticipation of a recovery from transient errors, while continuing to provide the output from the previous cycle. The algorithm will take care of bounding the age of the output in such scenarios as well. 

To make the system description and its requirements more precise, it is necessary to formally specify the assumed fault model and fault hypotheses~\cite{rushby2002overview} indicating the type, number and rate of faults that the system can tolerate while remaining operational~\cite{lamport1982byzantine,Powell1992Failuremode}. This is done in Section~\ref{sec:systemdescription}. In Section~\ref{sec:functionalrequirements}, we formulate the functional requirements of the system.  In Section~\ref{sec:design}, we give an overall design of the voter unit. The voter unit is designed as a state transition system with three major functions: fault identification, handling of transient and permanent failures using a multi cycle history of faults, and generation of a reliable output, which will be fed to the controller. 

Since theorem proving techniques are known to be more scalable compared to other approaches to formal verification~\cite{HandbookofModelchecking} for systems with a large state space, this work is done using the Rocq interactive theorem prover~\cite{RocqWebcite}. 
Rocq theorem prover is based on the calculus of inductive construction and satisfies the de Bruijn criterion~\cite{Chlipala2013certified}. In addition, Rocq supports higher-order logics and dependent types, which makes it easy to express complex specifications succinctly. We have attempted to follow Rocq's design principle of defining functions carrying proof of its properties embedded with it, rather than completing a code first and then proving the properties. Some of the requirements are shown to be satisfied as invariants of the states, while the remaining requirements, which depend on the multi cycle behaviour of the system, are captured as invariants maintained during every state transition. 
To allow the designer to fine-tune the reliability parameters to match the application specific requirements, we maintain the definition of some basic reliability parameters in a separate configuration module, and the system implementation is parameterized by them. The code of the voter unit is synthesized with the selected values of these parameters (see Figure~\ref{fig:voter_generator}).

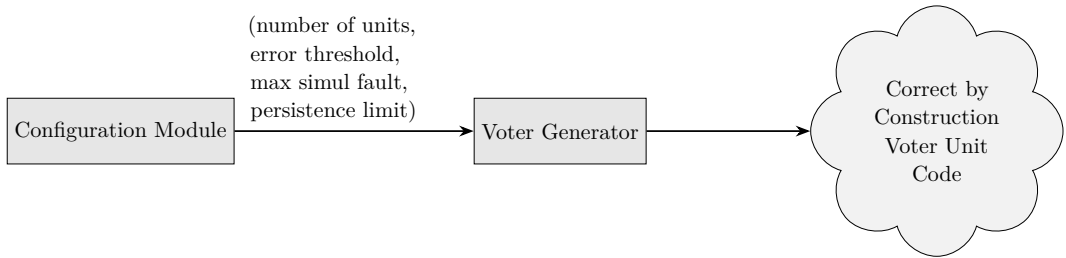
\begin{figure}
\centering
 \resizebox{\textwidth}{!}{\begin{tikzpicture}[node distance=2.5cm, auto]
\tikzstyle{block} = [rectangle, draw, fill=gray!20, text centered, minimum height=3em, minimum width=2cm]
\tikzstyle{arrow} = [thick,->,>=Stealth]

\node (config) [block] at (-3,1) {Configuration Module};
\node (votergen) [block] at (4,1) {Voter Generator};

\node (code) [cloud, cloud puffs = 8, draw, minimum width = 2cm, minimum height = 2cm, fill = gray!10] at (10,1){\parbox{2.1cm}{\centering Correct by\\ Construction\\ Voter Unit Code}};

\draw [arrow] (config) -- (votergen) 
node[midway]{\parbox{3.3cm}{(number of units,\\ error threshold,\\ max simul fault,\\persistence limit)}};

\draw [arrow] (votergen) -- (code);

\end{tikzpicture}}
\caption{To adjust the reliability levels, some parameters are kept as designer configurable, with a permissible range of values for each. A voter unit code with the specified set of parameter values can be generated.}
 \label{fig:voter_generator}
\end{figure}

\section{System Description and Fault Model}\label{sec:systemdescription}
In this section, we formally define the input and output formats of the voter unit and the fault model and fault hypothesis. 
\subsection{Input Format of the Voter Unit}\label{sec:unitmodel}
In each cycle of operation, the voter receives input from $N$ redundant input units that work synchronously and measure the same physical quantity that continuously varies over time. The value of $N$ can be decided by the user. The voter unit maintains a unique identification number \texttt{uid} for each input unit based on the port of the voter unit interface to which the input unit is connected. From each input unit, the voter receives a \texttt{reading}, which is a 2-tuple consisting of a) the measured value \texttt{val}, and b) the self-identified health status of the unit \texttt{hw\_hlth}. This kind of data format where a measurement is available with a health status is commonly used by sensors and measurement units used for critical applications~\cite{ADIS1675datasheet, GG1320usermanual}. 

The measurement from a properly functioning input unit may  be noisy, but within an acceptable threshold $\delta$ from the ground truth. An input unit can malfunction due to a transient or permanent fault. 

\subsection{Fault Model}\label{sec:unitfaultmodel}
As described in the input format, in each cycle, each input unit has a self-identified health status: \texttt{good} or \texttt{bad}. Another type of fault that is not self-identifying is a deviation greater than a fixed noise threshold $\delta$ from the ground truth of the physical quantity being measured. We call it a \textit{deviation fault}. A faulty behaviour of an input unit is defined as an occurrence of any one of these two faults. 
A faulty behaviour is considered \textit{transient} if it persists less than a  consecutive number of cycles of operation, denoted by \texttt{persistence\_lmt}.  Otherwise, it is considered a \textit{permanent fault}. For transient fault resilience, \texttt{persistence\_lmt} should be greater than $1$. It is assumed that the faulty behaviour is limited to the input units alone, and the hardware interface of the voter unit receiving readings from the input units is assumed to be non-faulty. Byzantine faults~\cite{lamport1982byzantine} are not considered in our formulation. 
We need to make a \textit{simultaneous fault hypothesis} about the maximum number of simultaneous new faults that are anticipated in a cycle of operation. This assumption is stated below.
\begin{hypothesis*}[Simultaneous Fault Hypothesis]
   In any cycle, among the input units that do not have a permanent fault before the cycle begins, at most $\mathtt{max\_simul\_fault}$ units can show faulty behaviour.
\end{hypothesis*}
The correctness of the voter unit's behaviour can be guaranteed only if a certain minimum number of input units without a permanent fault are present. Formally, we define a parameter \texttt{min\_required} and say that the system satisfies 
\textit{maximum permanent fault assumption} when there are at least \texttt{min\_required} number of input units without a permanent fault.  
By \textit{simultaneous fault hypothesis}, it is evident that \texttt{min\_required} should be at least $\mathtt{max\_simul\_fault} +1$. 
The values of parameters $\delta$, \texttt{persistence\_lmt} and \texttt{max\_simul\_fault} can be decided by the user.

\subsection{Output Format of the Voter Unit}\label{sec:outputformat}  
The format of the output produced by the voter unit is a 4-tuple consisting of \texttt{voter\_output}, \texttt{output\_age}, \texttt{validity\_status} and \texttt{presrvd\_data}. The first three are for feeding to the controller, while \texttt{presrvd\_data} is a feedback to the voter unit itself for decision making during state transitions.

In each cycle, the voter should select one of the input units as the \textit{prime unit}. The \texttt{voter\_output} consists of the \texttt{uid} of the prime unit and a \texttt{reading} from the prime unit. The \texttt{voter\_output} is a current cycle measurement of the prime unit if the prime unit has no identified fault in the current cycle. Otherwise, the previous cycle \texttt{voter\_output} is retained. The \texttt{output\_age} is used to indicate the number of cycles since when the voter output measurement has not been updated. The \texttt{validity\_status} is an indication of the reliability of the \texttt{voter\_output} and 
can be \texttt{valid}, \texttt{un\_id} (unidentifiable) or \texttt{not\_valid}. The \texttt{presrvd\_data} consists of the \texttt{uid}, \texttt{reading} and accumulated fault history of the prime input corresponding to the \texttt{voter\_output}.

\section{Formulation of Functional Requirements}\label{sec:functionalrequirements}
The functional requirements of the generic data selection and voting unit is described in this section.  
The requirements are broadly classified into three categories: (i) related to fault identification, (ii) related to isolation of permanently faulty units and (iii) related to output generation. We provide an analysis of the theoretical limits achievable for fault identification and show that the requirements formulated are tight with respect to these limits. The proof of correctness of fault identification rules presented in a conventional way in this section are machine verified in our Rocq implementation.  
The consistency of the set of requirements is verified in Rocq, and the code of the voter unit produced is proven to be correct by construction satisfying all the requirements. 
 
\subsection{Fault Identification}
In each cycle, the voter unit should analyze the input received from the input units and record the information of faults identified by the fault identification algorithm. As a method of recording the faults, the voter unit will maintain a  \texttt{risky\_count} corresponding to each input unit, which indicates the number of consecutive cycles of operation (including the current cycle) in which the measurement from the unit is not identified as non-faulty.
Before explaining the details of requirements related to fault identification, we need to first give the definitions of soundness and completeness of fault identification. 
\subsubsection{Soundness and completeness}
As the actual value (ground truth) of the physical quantity being measured is unknown, units with deviation faults have to be identified using mutual deviation checks. As per the fault model, a non-faulty measurement value can have a deviation of $\delta$ from the ground truth, in a positive or negative direction. Hence, a deviation of $2\delta$ between two inputs is allowed and cannot be identified as a fault. 
\begin{definition}[miscomparison]
    A deviation of more than $2\delta$ between measurements in a cycle from a pair of input units is a \textit{miscomparison}. 
\end{definition}
A miscomparison indicates that at least one of the two measurements involved in the comparison is faulty. Since it is theoretically impossible to identify all deviation faults in certain scenarios~\cite{wakerly1978synchronization}, we want to classify the units to have their \texttt{miscomparison\_status} to be one among \texttt{miscomparing}, \texttt{not\_miscomparing} and \texttt{maybe\_miscomparing}. 

We start by analysing a scenario of fault identification that will justify the definitions of soundness and completeness of fault identification that follow. 
\begin{claim}\label{clm:justification_incompleteness}
    There are situations where the measurement of an input unit $u$ deviates $3\delta$ from the ground truth and it is still impossible to detect the unit as faulty. 
\end{claim}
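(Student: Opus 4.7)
The plan is to prove the claim by constructing an explicit scenario in which two distinct "worlds" (choices of ground truth together with which units are faulty) induce exactly the same readings at the voter, while in one world the unit $u$ has a $3\delta$ deviation from the ground truth and is faulty, and in the other $u$ deviates by only $\delta$ and is non-faulty. Since the voter only observes the readings (and the self-reported hardware health, which for deviation faults is uninformative by the fault model in Section~\ref{sec:unitfaultmodel}), it cannot distinguish these two worlds, and therefore cannot certify $u$ as faulty in the first world.

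Concretely, I would take $N \geq 3$ input units whose readings in a given cycle are, for some real value $g$ and with all hardware health statuses reported as \texttt{good}, $r_u = g + 3\delta$, $r_v = g + \delta$, and $r_w = g - \delta$ (any remaining units can be given readings equal to $r_v$ to keep the example minimal and to preserve the simultaneous fault hypothesis). I would then verify pairwise differences: $|r_u - r_v| = 2\delta$ and $|r_v - r_w| = 2\delta$ are not miscomparisons (the threshold is strictly greater than $2\delta$), while $|r_u - r_w| = 4\delta$ is a miscomparison. So the only observable evidence of faulty behaviour is the single miscomparing pair $\{u, w\}$.

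Next, I would exhibit the two worlds. In world A the ground truth is $g$, so $v$ and $w$ deviate by $\delta$ (non-faulty) and $u$ deviates by $3\delta$ (the unique faulty unit, as required by the claim). In world B the ground truth is $g + 2\delta$, so $u$ and $v$ deviate by $\delta$ (non-faulty) and $w$ deviates by $3\delta$ (the unique faulty unit). Both worlds are consistent with the fault model and with any \texttt{max\_simul\_fault} $\geq 1$, and they produce the same tuple of readings at the voter. Hence any deterministic fault identification procedure that takes only the readings as input must return the same verdict on $u$ in both worlds, so it either fails to flag $u$ in world A or falsely flags $u$ in world B; in the former case, $u$'s $3\delta$ deviation in world A is not detected, establishing the claim.

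I do not anticipate a serious obstacle; the only subtlety is arguing that the ambiguity remains robust to extra information the voter might try to use, i.e.\ that the self-reported health cannot break the tie (it does not, since deviation faults are not self-identifying) and that the scenarios respect the standing hypotheses (each has exactly one faulty unit, well within any admissible choice of \texttt{max\_simul\_fault}). I would state these checks explicitly so that the reader sees the example is not pathological but a genuine instance within the fault model introduced earlier.
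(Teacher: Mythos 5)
Your proposal is correct and uses essentially the same argument as the paper: two candidate ground truths differing by $2\delta$ that produce identical observable readings, with $u$ deviating $3\delta$ (faulty) under one and only $\delta$ (non-faulty) under the other, so no sound procedure can flag $u$. The paper's construction is marginally simpler---all other units read $x+\delta$, so no miscomparison arises at all---whereas your extra unit at $g-\delta$ introduces one miscomparing pair, but the core indistinguishability idea is identical.
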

\begin{proof}
Suppose that unit $u$ gives a measurement $x+3\delta$, while other units give a measurement value $x+\delta$. Now, consider two scenarios (i)  the (unknown) ground truth is $x$ and (ii) the ground truth is $x+2\delta$. 

 In both scenarios, the mutual deviation check will not show any miscomparison. In the first case, the measurement of $u$ deviates $3\delta$ from the ground truth and in the second case, its deviation is within the noise threshold $\delta$ from the ground truth. Since the ground truth is unknown, it is impossible to distinguish between these two scenarios.
\end{proof}
 The above claim justifies the definitions of soundness and completeness of the fault identification stated below. We will discuss the acheivability of these in Section~\ref{sec:soundandcomplete}. 
\begin{definition}[Soundness and Completeness of Deviation Fault Identification]\label{def:sound_complete}
~
\begin{itemize}
    \item (Soundness a) If the $\mathtt{miscomparison\_status}$ of a unit is $\mathtt{miscomparing}$, its measurement deviates more than $\delta$ from the (unknown) ground truth.  
    \item (Soundness b) If the $\mathtt{miscomparison\_status}$ of a unit is $\mathtt{not\_miscomparing}$, then the deviation of the measurement of that unit from the ground truth is at most $3\delta$.
    \item (Completeness a) If the measurement of a unit deviates more than $3\delta$ from the ground truth, then its  $\mathtt{miscomparison\_status}$ should be $\mathtt{miscomparing}$. 
    \item (Completeness b) If the deviation of the measurement of a unit from the ground truth is at most $\delta$, its $\mathtt{miscomparison\_status}$ should be $\mathtt{not\_miscomparing}$. 
\end{itemize}
\end{definition}
\subsubsection{Requirements of soundness and completeness of fault identification}\label{sec:soundandcomplete}
The requirement regarding the handling of self-identifying faults is straightforward. Input units with self-identifying \texttt{ bad} health must be identified as faulty. The following basic correctness requirement for fault identification algorithm is the following.
\begin{requirement}\label{risky_count_correctness}
For an input unit which is not yet marked as permanently faulty, 
 \begin{itemize}
     \item \texttt{risky\_count} is either zero or one more than its \texttt{risky\_count} in the previous cycle and
     \item \texttt{risky\_count} is incremented in a cycle if and only if its self identifying health status is \texttt{bad} or its \texttt{miscomparison\_status} is \texttt{miscomparing} or \texttt{maybe\_miscomparing}.
 \end{itemize}
\end{requirement}

Now, we will identify some conditions under which the completeness property of fault identification cannot be achieved, so that the requirements on the completeness can be fine tuned. One such scenario is when among the input units that do not have a permanent fault, there are not enough units without self-identifying faults. 
We will make this statement more precise shortly. Among input units without a permanent fault, if there are $k$ units with self-identifying  \texttt{bad} health in a cycle, then by \textit{simultaneous fault hypothesis}, there can be at most $\mathtt{max\_simul\_fault}-k$ units with non-self-identifying deviation faults. Let $\mathtt{mis\_flt\_lmt}=\mathtt{max\_simul\_fault}-k$. 
\begin{claim}\label{clm:insufficient_units}
    There are situations in which the number of input units that do not have a permanent fault and having good health is $2*\mathtt{mis\_flt\_lmt}$ and yet it is not possible to achieve completeness of deviation fault identification.
\end{claim}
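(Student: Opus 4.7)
My plan is to lift the two-scenario argument from Claim~\ref{clm:justification_incompleteness} to a configuration involving exactly $2\,\mathtt{mis\_flt\_lmt}$ good-health, non-permanently-faulty units, engineered so that the simultaneous fault hypothesis is saturated in two mutually exclusive ways that look identical to the voter.

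First I would set $m := \mathtt{mis\_flt\_lmt}$, partition the $2m$ units into two equal groups $A$ and $B$, and choose readings so that every unit of $A$ reports $(0,\mathtt{good})$ and every unit of $B$ reports $(4\delta,\mathtt{good})$. The key feature of this setup is that the tuple of readings the voter actually sees is invariant under the symmetry that swaps $A$ and $B$; hence any deterministic fault identification algorithm must assign each unit the same $\mathtt{miscomparison\_status}$ regardless of which ground truth actually holds.

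Second, I would expose two admissible worlds with distinct (but unknown) ground truths, $g_1 = 0$ and $g_2 = 4\delta$. Under $g_1$ the units of $A$ are right on target while those of $B$ have deviation $4\delta > 3\delta$; under $g_2$ the roles are swapped. In either world exactly $m = \mathtt{mis\_flt\_lmt}$ units exhibit a deviation fault, so the simultaneous fault hypothesis is satisfied (the self-identifying faults outside $A \cup B$, of which there are $k$, contribute the remaining slack $\mathtt{max\_simul\_fault}-\mathtt{mis\_flt\_lmt}$), and both scenarios must therefore be handled correctly by the algorithm.

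Finally, I would close the argument by soundness/completeness bookkeeping. Soundness~a forbids labelling any unit of $A$ as $\mathtt{miscomparing}$ (in world $g_1$ its deviation is $0 < \delta$), and by symmetry equally forbids labelling any unit of $B$ as $\mathtt{miscomparing}$ (in world $g_2$). But in world $g_1$ the units of $B$ have deviation $4\delta > 3\delta$ from the ground truth, so completeness~a demands they be marked $\mathtt{miscomparing}$: a contradiction. The only delicate point I foresee is confirming that the constructed scenarios really are admissible -- i.e.\ that the fault counts saturate but do not exceed $\mathtt{max\_simul\_fault}$ and that the good-health precondition on $2m$ units is preserved -- but this is immediate once $|A| = |B| = m$ is fixed, so I do not anticipate a genuine obstacle.
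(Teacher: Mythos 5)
Your proof is correct and follows essentially the same route as the paper's: two equal groups of good-health, non-isolated units whose readings lie more than $3\delta$ apart, producing two admissible ground truths that the voter cannot distinguish. The paper merely instantiates this with $\mathtt{mis\_flt\_lmt}=2$ and four concrete units, whereas you carry it out for general $m$ and spell out the soundness-versus-completeness contradiction explicitly; both suffice, since the claim is existential.
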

\begin{proof}
    Consider a situation where $\mathtt{mis\_flt\_lmt}$ is $2$ and there are $4$ units $u_1, u_2, u_3, u_4$ without a permanent fault yet and with \texttt{good} health. Suppose $u_1$ and $u_2$ have the same measurement $x$ and $u_3$ and $u_4$ have the same measurement $y$, which is more than $3\delta$ away from $x$. There are two possible scenarios: (i) both $u_1$ and $u_2$ are faulty, while $y$ is within $\delta$ of the ground truth (ii) both $u_3$ and $u_4$ are faulty, while $x$ is within $\delta$ of the ground truth. 

    Here, since $|x-y| > 3\delta$, it is clear that either $x$ or $y$ deviates more than $\delta$ from the ground truth. However, since the ground truth is unknown, it is impossible to distinguish which among these actually occurred. Hence, even when the number of input units that do not have a permanent fault and having good health is $2*\mathtt{mis\_flt\_lmt}$, completeness cannot be achieved.
\end{proof}
\begin{definition}[minimum surviving units assumption]\label{def:min_surviving_units}
  The system is said to satisfy the \textit{minimum surviving units assumption} when among input units that do not have a permanent fault in a cycle, at least $2*\mathtt{mis\_flt\_lmt}+1$ units have \texttt{good} health status. 
\end{definition}

 
By Claim~\ref{clm:insufficient_units}, the following requirement regarding completeness of fault identification is tight with respect to the theoretical limit. 
\begin{requirement}\label{maybe_nil}
    When the system satisfies the minimum surviving units assumption, the\\ \texttt{miscomparison\_status} of each unit should be either \texttt{miscomparing} or \texttt{not\_miscomparing} and this classification should satisfy the completeness requirements given in Definition~\ref{def:sound_complete}. 
\end{requirement} 
It can be shown that this requirement is always satisfiable. We will address this point later. The most important requirement of our deviation fault identification algorithm is the following. 
\begin{requirement}\label{always_sound}
 The soundness conditions specified in Definition~\ref{def:sound_complete} must always be satisfied by the fault identification algorithm.  
\end{requirement}

\subsubsection{Avoiding mis-classifications to extend operational life}\label{sec:prolongisolation}

If the minimum surviving units assumption is not satisfied, our main objective is to satisfy the soundness requirements given in Definition~\ref{def:sound_complete}. For this, it suffices to trivially declare all units as \texttt{maybe\_miscomparing} when the \textit{minimum surviving units assumption} is not satisfied. However, this may lead to an early declaration of the input units as permanently faulty and may cause the system to fail to satisfy the \textit{maximum permanent fault assumption}. Hence, let us explore some more scenarios in which partial fault identification can be achieved. 
\begin{claim}\label{clm:partial_identification_miscomparing}
   If the measurement value of an input unit $u$ miscompares with at least $\mathtt{mis\_flt\_lmt}+1$ other input units without a permanent fault yet and with \texttt{good} health, then the measurement of $u$ deviates more than $\delta$ from the ground truth. Further, there are situations, where a unit is miscomparing with \texttt{mis\_flt\_lmt} other units and still the unit cannot be identified as faulty. 
\end{claim}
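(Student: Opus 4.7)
I would proceed by contradiction. Suppose the measurement of $u$ deviates at most $\delta$ from the ground truth, so $u$ itself is non-faulty. Take any other unit $v$ (without a permanent fault yet and with \texttt{good} health) whose measurement miscompares with that of $u$: by definition $|u.\mathtt{val} - v.\mathtt{val}| > 2\delta$. Combined with the fact that $u$ is within $\delta$ of the ground truth, the triangle inequality forces $v$'s measurement to deviate more than $\delta$ from the ground truth, so $v$ has a (non-self-identifying) deviation fault. Applying this to every one of the $\mathtt{mis\_flt\_lmt}+1$ good-health non-permanently-faulty units that miscompare with $u$, we get $\mathtt{mis\_flt\_lmt}+1$ such units exhibiting deviation faults among the good-health survivors. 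This contradicts the Simultaneous Fault Hypothesis, which caps the number of non-self-identifying faults in this subset by $\mathtt{mis\_flt\_lmt} = \mathtt{max\_simul\_fault}-k$. Hence the assumption fails and the deviation of $u$ from the ground truth must exceed $\delta$.

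\textbf{Plan for Part 2 (tightness).} I would construct a small symmetric scenario along the lines of Claim~\ref{clm:insufficient_units}. Take $\mathtt{mis\_flt\_lmt}$ units $v_1, \dots, v_{\mathtt{mis\_flt\_lmt}}$ that all report the same value $x$, and let $u$ report a value $y$ with $|x-y| > 2\delta$; all of these units are without a permanent fault and have \texttt{good} health. Then $u$ miscompares with exactly $\mathtt{mis\_flt\_lmt}$ units. I will exhibit two worlds consistent with all the observations and with the Simultaneous Fault Hypothesis: in world A the ground truth is close to $x$ (within $\delta$) and only $u$ is faulty; in world B the ground truth is close to $y$ (within $\delta$) and exactly $v_1,\dots,v_{\mathtt{mis\_flt\_lmt}}$ are faulty. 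In both worlds the number of non-self-identifying faults among good-health survivors is at most $\mathtt{mis\_flt\_lmt}$, so neither world can be ruled out. Since the observations are identical, no algorithm having access only to the readings and health flags can decide whether $u$ is faulty.

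\textbf{Main obstacle.} The main care-point is not the arithmetic but the bookkeeping: in Part 1 I must make sure the $\mathtt{mis\_flt\_lmt}+1$ witnesses all belong to the restricted pool (no permanent fault, \texttt{good} health) so that the Simultaneous Fault Hypothesis applies to them collectively, and in Part 2 I must pick the two hypothetical ground-truth values so that no additional miscomparison is forced among the $v_i$'s and so that both worlds respect $\mathtt{mis\_flt\_lmt} = \mathtt{max\_simul\_fault}-k$ simultaneously. Choosing the construction so that the other units (beyond $u$ and the $v_i$'s, if any) agree with the prevailing majority in each world, or simply restricting the illustration to a system containing only these $\mathtt{mis\_flt\_lmt}+1$ units, avoids this pitfall and keeps the symmetry argument clean.
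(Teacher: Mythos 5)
Your proposal is correct and matches the paper's argument in substance: your Part 1 is the contrapositive of the paper's direct pigeonhole step (the paper extracts one non-faulty unit $u'$ among the $\mathtt{mis\_flt\_lmt}+1$ witnesses via the simultaneous fault hypothesis and applies the triangle inequality, which is the same counting argument you run by contradiction), and your Part 2 uses the same two-indistinguishable-worlds idea. The only cosmetic difference is that the paper reuses the symmetric $2*\mathtt{mis\_flt\_lmt}$-unit scenario of Claim~\ref{clm:insufficient_units}, whereas you build a smaller asymmetric instance with $\mathtt{mis\_flt\_lmt}+1$ units; both are valid.
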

\begin{proof}
Among the $\mathtt{mis\_flt\_lmt}+1$ other units with which unit $u$ miscompares, by the \textit{simultaneous fault hypothesis}, there must be at least one non-faulty unit $u'$. Since measurement of $u'$ is within $\delta$ from ground truth and the measurements of $u$ and $u'$ deviate more than $2 \delta$, it follows that the measurement of $u$ definitely deviates more than $\delta$ from the ground truth. 
Hence, $u$ is faulty. To see the second part of the claim, it suffices to consider the scenario described in the proof of Claim~\ref{clm:insufficient_units}.
\end{proof}
 Now, let us consider another interesting scenario where a partial fault identification is possible. To simplify our presentation, we make the following definitions. 
\begin{definition}[$\mathtt{miscomparing\_list}$, $\mathtt{rem\_mis\_flt\_lmt}$]
$\mathtt{miscomparing\_list}$ in a cycle is the list of  input units without a permanent fault yet and with $\mathtt{good}$ health and whose measurement miscompares with at least $\mathtt{mis\_flt\_lmt}+1$ other such units. We define $\mathtt{rem\_mis\_flt\_lmt} = \mathtt{mis\_flt\_lmt} - |\mathtt{miscomparing\_list}|$. 
\end{definition}
\begin{definition}[identified fault]
    A unit is said to have an identified fault in a cycle if its health status is $\mathtt{bad}$ or it is in the $\mathtt{miscomparing\_list}$. 
\end{definition}

\begin{claim}\label{clm:clm:partial_identification_not_miscomparing}
Let $u$ be a not yet permanently faulty input unit without an identified fault. 
    If the deviation of the measurement of $u$  is within $2\delta$ from  \texttt{rem\_mis\_flt\_lmt} other such units, then the measurement of $u$ is within $3\delta$ of the ground truth. Further, there are situations where the measurement of such a unit $u$ is within $2\delta$ from exactly \texttt{rem\_mis\_flt\_lmt} $-1$ other units and still it is impossible to assign a miscomparison status \texttt{miscomparing} or \texttt{not\_miscomparing} to  $u$ satisfying the soundness requirements. 
\end{claim}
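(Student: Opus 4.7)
My plan is to prove the two halves of the claim by complementary techniques: a pigeonhole-plus-triangle-inequality argument for the sufficient condition, and an adversarial two-cluster construction (analogous to the one used in Claim~\ref{clm:insufficient_units}) for the tightness statement.

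For the first half, let $S$ be a set of $\mathtt{rem\_mis\_flt\_lmt}$ other not-yet-permanently-faulty, good-health, unidentified units whose measurements are within $2\delta$ of $u$'s. I would first argue, using Claim~\ref{clm:partial_identification_miscomparing}, that every unit already in $\mathtt{miscomparing\_list}$ is genuinely faulty, so those $|\mathtt{miscomparing\_list}|$ units consume exactly that portion of the simultaneous fault budget; consequently, among not-yet-permanently-faulty, good-health units that lie outside the miscomparing list, at most $\mathtt{rem\_mis\_flt\_lmt}$ can be faulty. Applying the pigeonhole principle to the $\mathtt{rem\_mis\_flt\_lmt} + 1$ units in $S \cup \{u\}$, at least one of them, call it $u^\star$, must be non-faulty and therefore has its measurement within $\delta$ of the ground truth. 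If $u^\star = u$ we are done immediately; otherwise the triangle inequality combined with the $2\delta$ closeness of $u$ to $u^\star$ yields a deviation of at most $3\delta$ from the ground truth for $u$.

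For the second half, I would exhibit a scenario with two clusters of size $\mathtt{rem\_mis\_flt\_lmt}$ each: one cluster containing $u$ together with $\mathtt{rem\_mis\_flt\_lmt} - 1$ other eligible units, all reporting value $x$, and another cluster of eligible units all reporting value $y$ with $|x - y| > 3\delta$; the remaining non-faulty units and any already-isolated miscomparing units can be placed so as not to perturb the analysis. A quick count shows that each unit in either cluster miscompares with only $\mathtt{rem\_mis\_flt\_lmt} \leq \mathtt{mis\_flt\_lmt}$ others, which falls short of the $\mathtt{mis\_flt\_lmt} + 1$ threshold required for inclusion in $\mathtt{miscomparing\_list}$, so the list computed from the scenario agrees with what the hypothesis assumes, and $u$ is within $2\delta$ of exactly $\mathtt{rem\_mis\_flt\_lmt} - 1$ other eligible units. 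Two fault assignments remain consistent with these observations: one where the entire $x$-cluster is faulty and the ground truth is near $y$, and one where the entire $y$-cluster is faulty and the ground truth is near $x$; these make $u$ faulty in the first and non-faulty in the second, so soundness forbids both \texttt{miscomparing} and \texttt{not\_miscomparing}. The main obstacle will be the bookkeeping in this construction — verifying carefully that the values of $\mathtt{miscomparing\_list}$ and $\mathtt{rem\_mis\_flt\_lmt}$ that the algorithm would compute from the constructed scenario coincide with the quantities assumed in the hypothesis, so that the example actually witnesses what is claimed.
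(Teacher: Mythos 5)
Your proposal is correct and follows essentially the same route as the paper: the first half is the same pigeonhole-on-the-residual-fault-budget argument (using Claim~\ref{clm:partial_identification_miscomparing} to charge the $\mathtt{miscomparing\_list}$ units against the budget) followed by the triangle inequality, and the second half is the same two-indistinguishable-clusters construction, which the paper instantiates concretely with $\mathtt{mis\_flt\_lmt}=2$ and four units rather than parametrically. The bookkeeping concern you flag is resolved in the paper simply by choosing a scenario with no extra eligible units outside the two clusters and an empty $\mathtt{miscomparing\_list}$.
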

\begin{proof}
By simultaneous fault hypothesis, \texttt{rem\_mis\_flt\_lmt} is an upper bound on the remaining number of units with faults among units with health status \texttt{good}, after applying the rule in Claim~\ref{clm:partial_identification_miscomparing}.  Hence, if the deviation of the measurement of a unit $u$ which does not have an \textit{identified} fault is within $2\delta$ from  \texttt{rem\_mis\_flt\_lmt} other such units, then we can conclude that either $u$ or at least one among them is non-faulty. In both these cases, we can conclude that the unit $u$
has a measurement within $3\delta$ from the ground truth. 

To see the second part of this claim, it is sufficient to analyze a similar situation as given in the proof of Claim~\ref{clm:insufficient_units}.  Suppose  $\mathtt{mis\_flt\_lmt}$ is $2$ and there are $4$ units $u_1, u_2, u_3, u_4$ without a permanent fault yet and with \texttt{good} health. Suppose $u_1$ and $u_2$ have the same measurement $x$ and $u_3$ and $u_4$ have the same measurement $y$, which is more than $3\delta$ away from $x$. Consider two possible scenarios: (i) the ground truth is $x$ (ii) the ground truth is $y$. Here, observe that \texttt{rem\_mis\_flt\_lmt} is the same as \texttt{mis\_flt\_lmt}, which is $2$. To satisfy the soundness condition, in the first scenario, miscomparison status of neither $u_1$ nor $u_2$ can be assigned as \texttt{miscomparing}. In the second scenario, their miscomparison  status cannot be assigned as \texttt{not\_miscomparing}. The situation of $u_3$ and $u_4$ is symmetric. 

Since it is impossible to distinguish between the two scenarios without knowing the ground truth, the miscomparison status of none of the units can be set to  \texttt{miscomparing} or \texttt{not\_miscomparing} ensuring the soundness requirements.
\end{proof}
Claim~\ref{clm:partial_identification_miscomparing} and Claim~\ref{clm:clm:partial_identification_not_miscomparing} show that the two requirements stated below are tight with respect to the theoretical limits. 

\begin{requirement}\label{miscomparing} If the measurement value of an input unit miscompares with at least $\mathtt{mis\_flt\_lmt}+1$ other input units without a permanent fault yet and with \texttt{good} health, then it should have \texttt{miscomparison\_status} as  \texttt{miscomparing}.
\end{requirement}

\begin{requirement} \label{not_miscomparing} 
Among the not permanently faulty input units which do not have an identified fault, if the measurement of a unit is within $2\delta$ from at least \texttt{rem\_mis\_flt\_lmt}
others, then its \texttt{miscomparison\_status} should be \texttt{not\_miscomparing}.
\end{requirement}

Now, the following is an interesting observation.
\begin{proposition}\label{prop:completeness_achieved}
When the system satisfies the minimum surviving units assumption, requirements R\ref{miscomparing} and R\ref{not_miscomparing} are sufficient to guarantee requirements R\ref{maybe_nil} and R\ref{always_sound}.
\end{proposition}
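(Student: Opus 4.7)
The plan is to establish soundness (R\ref{always_sound}) and the completeness/totality of classification (R\ref{maybe_nil}) separately, leveraging Claim~\ref{clm:partial_identification_miscomparing} and Claim~\ref{clm:clm:partial_identification_not_miscomparing} as off-the-shelf results for soundness and a counting argument driven by the simultaneous fault hypothesis for completeness.

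For soundness, I would observe that the only way a unit receives the label \texttt{miscomparing} in the algorithm is through R\ref{miscomparing}, and the only way it receives the label \texttt{not\_miscomparing} is through R\ref{not_miscomparing}. But the proofs of Claim~\ref{clm:partial_identification_miscomparing} and Claim~\ref{clm:clm:partial_identification_not_miscomparing} already show that the hypotheses of R\ref{miscomparing} and R\ref{not_miscomparing} imply, respectively, a deviation strictly greater than $\delta$ and a deviation at most $3\delta$ from the ground truth. Hence soundness (a) and (b) of Definition~\ref{def:sound_complete} hold. Note that this part of the argument does not even need the minimum surviving units assumption.

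For R\ref{maybe_nil}, I would fix a good-health non-permanently-faulty unit $u$. By the minimum surviving units assumption there are at least $2\cdot\mathtt{mis\_flt\_lmt}+1$ such units, and by the simultaneous fault hypothesis at most $\mathtt{mis\_flt\_lmt}$ of them deviate more than $\delta$ from the ground truth, leaving at least $\mathtt{mis\_flt\_lmt}+1$ non-faulty among them. Let $m(u)$ denote the number of such units that miscompare with $u$. If $m(u)\ge \mathtt{mis\_flt\_lmt}+1$, then R\ref{miscomparing} fires on $u$. Otherwise $u$ fails to miscompare with at least $(2\cdot\mathtt{mis\_flt\_lmt}+1)-1-\mathtt{mis\_flt\_lmt}=\mathtt{mis\_flt\_lmt}$ other such units, of which at most $|\mathtt{miscomparing\_list}|$ lie in the miscomparing list, leaving at least $\mathtt{rem\_mis\_flt\_lmt}$ that are within $2\delta$ of $u$ and have no identified fault; so R\ref{not_miscomparing} fires. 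To finish, I would check the two completeness clauses of Definition~\ref{def:sound_complete}: if $u$ deviates more than $3\delta$, it miscompares with every one of the $\ge\mathtt{mis\_flt\_lmt}+1$ non-faulty units (their measurements differ from $u$'s by more than $3\delta-\delta=2\delta$), so R\ref{miscomparing} fires; if $u$ deviates at most $\delta$, then by the already-established soundness (a), $u$ is not in the miscomparing list and no non-faulty unit is in the miscomparing list, so $u$ has no identified fault and is within $2\delta$ of at least $\mathtt{mis\_flt\_lmt}+1\ge \mathtt{rem\_mis\_flt\_lmt}$ other units without an identified fault, so R\ref{not_miscomparing} fires.

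The main subtlety I expect is the bookkeeping of the interaction between R\ref{miscomparing} and R\ref{not_miscomparing} through $\mathtt{miscomparing\_list}$ and $\mathtt{rem\_mis\_flt\_lmt}$. The decisive lever is that soundness (a) is available \emph{before} one reasons about R\ref{not_miscomparing}: it forces every member of the miscomparing list to be genuinely faulty, which both bounds $|\mathtt{miscomparing\_list}|$ against the simultaneous fault budget and keeps non-faulty units out of it, making the counting arguments in the previous paragraph go through without circularity.
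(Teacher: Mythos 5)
Your proposal is correct and follows essentially the same route as the paper's proof: a dichotomy on how many of the other $2\cdot\mathtt{mis\_flt\_lmt}$ good-health, non-isolated units a given unit $u$ miscompares with, with Claim~\ref{clm:partial_identification_miscomparing} and Claim~\ref{clm:clm:partial_identification_not_miscomparing} supplying soundness and the simultaneous fault hypothesis supplying the counts. If anything you are more careful than the paper in one spot --- explicitly discounting the members of $\mathtt{miscomparing\_list}$ when checking that at least $\mathtt{rem\_mis\_flt\_lmt}$ identified-fault-free units remain within $2\delta$ of $u$, a step the paper's proof passes over by simply noting $\mathtt{mis\_flt\_lmt}\ge\mathtt{rem\_mis\_flt\_lmt}$.
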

\begin{proof}
Suppose that the system satisfies the minimum surviving units assumption. Then $2*\mathtt{mis\_flt\_lmt}+1$ input units are available among units without a permanent fault yet and with \texttt{good} health. Consider such an arbitrary unit $u$.  
Among the other $2*\mathtt{mis\_flt\_lmt}$ units other than $u$, if there are at least $\mathtt{mis\_flt\_lmt}+1$ units with which $u$ miscompares, then by Claim~\ref{clm:partial_identification_miscomparing}, measurement of $u$ deviates more than $\delta$ from the ground truth. Otherwise, there are at most $\mathtt{mis\_flt\_lmt}$ other units with which $u$ miscompares and hence there are at least $\mathtt{mis\_flt\_lmt} \ge \mathtt{rem\_mis\_flt}$ units whose measurement is within $2\delta$ from that of $u$. In this case, by Claim~\ref{clm:clm:partial_identification_not_miscomparing}, the measurement of $u$ is within $3 \delta$ deviation from the ground truth. In the first case, by R\ref{miscomparing} the miscomparison status of $u$ will be \texttt{miscomparing} and the second case, by R\ref{not_miscomparing} $u$ will have miscomparison status \texttt{not\_miscomparing}. In both cases, the soundness and completeness properties are satisfied. Since this can be done for an arbitrary $u$, R\ref{maybe_nil} will be satisfied. Since soundness is ensured, R\ref{always_sound} also holds. 
\end{proof}
\begin{note}
 Requirements R\ref{maybe_nil} and R\ref{always_sound}
 together means that under minimum surviving units assumption, all units are assigned miscomparison status \texttt{miscomparing} or \texttt{not\_miscomparing} and the classification is sound and complete. By Proposition~\ref{prop:completeness_achieved}, this can be ensured by satisfying requirements R\ref{miscomparing} and R\ref{not_miscomparing}. Moreover, we can algorithmically ensure R\ref{miscomparing} and R\ref{not_miscomparing} by applying the straightforward conditions in 
Claim~\ref{clm:partial_identification_miscomparing} and Claim~\ref{clm:clm:partial_identification_not_miscomparing}.  
  \end{note}
  
\subsection{Isolation of Permanently Faulty Units}
As discussed in the previous section, in each cycle the voter unit uses its fault identification algorithm to identify the fault status of each input unit. The voter unit maintains some information to keep track of the accumulated fault status of each unit. This information will be used to distinguish between transient faults and permanent faults in order to avoid corruption of the feed to the controller in future from a permanently faulty unit.

One such information maintained is the \texttt{risky\_count} of an input unit, which needs to be updated as mentioned in requirement R\ref{risky_count_correctness}. 
An input unit with a non-zero \texttt{risky\_count} which is less than \texttt{persistence\_lmt} is regarded to have a \textit{transient fault}. If the transient fault clears before it reaches \texttt{persistence\_lmt}, then the \texttt{risky\_count} is reset to zero. A sequence of identified faulty behaviour in consecutive cycles, leading to the \texttt{risky\_count} becoming equal to \texttt{persistence\_lmt} is regarded as a \textit{permanent fault}.

For each input unit, the voter unit also maintains an \texttt{isolation status}. The following are the requirements related to the isolation of a unit.
\begin{requirement} \label{isolatedifpersistence} 
A unit is isolated if and only if the \texttt{risky\_count} of the unit reaches the predefined threshold indicated by \texttt{persistence\_lmt}.
\end{requirement}
 \begin{requirement}\label{isolatedisisolated} If an input unit is isolated in a cycle of operation, it continues to remain isolated for all future cycles. 
 \end{requirement}
 \subsection{Output Generation Related Requirements}\label{subsec:output_gen}
Using the current cycle input data from each input unit and based on the updated status information of each input unit, the voter generates an output, which is used as feed to the controller. As mentioned in the output format, the feed to the controller consists of \texttt{voter\_output}, \texttt{output\_age} and \texttt{validity\_status}.

\subsubsection{Voter output and prime unit switching}
   A \texttt{unit\_output} consists of the \texttt{uid} of a unit and a \texttt{reading} received from it.
   To explain the requirements regarding the output, the following definition will be useful. 

\begin{definition}[healthy data]
    A $\mathtt{healthy\_data}$ is a $\mathtt{unit\_output}$ with $\mathtt{good}$ health such that, in the cycle of its measurement the input unit with the corresponding $\mathtt{uid}$ is not isolated and its $\mathtt{miscomparison\_status}$ is $\mathtt{not\_miscomparing}$.
\end{definition}
 The \texttt{voter\_output} in a cycle is the selected \texttt{unit\_output} and the \texttt{uid} present in it is the prime unit of the cycle. In case of a transient fault in the prime unit,  the \texttt{voter\_output} in the previous cycle will remain the \texttt{voter\_output} for current cycle, until the fault disappears or becomes permanent.  One basic requirement is that if a fault is identified, it should be contained within the faulty unit and should not be allowed to corrupt the input to the controller. This requirement is stated below.  
\begin{requirement}\label{healthy} The voter always outputs a \texttt{healthy\_data}. 
\end{requirement}
If the prime unit of the previous cycle gets isolated in the current cycle and another healthy unit is available, then the voter should select another input unit to provide the feed for the controller. However, as mentioned in the introduction, a requirement in this context is that frequent change of the prime unit used for output generation must be avoided. This requirement is formulated below. 
 \begin{requirement}\label{switched_only_if_isolated} 
 Switching of the prime unit occurs in a cycle only if the prime unit of the previous cycle got isolated. 
 \end{requirement}
\noindent The following proposition shows that if the requirements defined so far are satisfied, switching of prime unit will not happen frequently.
 \begin{proposition}\label{prop:switch_persistence}
     By satisfying the requirements R\ref{risky_count_correctness}, R\ref{isolatedifpersistence}, R\ref{healthy} and R\ref{switched_only_if_isolated}, it is ensured that the input unit selected as the prime unit shall not change for at least $\mathtt{persistence\_lmt}$ number of cycles.
 \end{proposition}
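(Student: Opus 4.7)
The plan is to trace the life cycle of the prime unit from the moment it is selected to the earliest cycle at which it could be replaced. Let $p$ be the prime unit at a cycle $c_0$ where a (re)selection of the prime just occurred. My first step is to establish that the \texttt{risky\_count} of $p$ equals zero at cycle $c_0$. By R\ref{healthy}, the \texttt{voter\_output} in cycle $c_0$ is a \texttt{healthy\_data} carrying the \texttt{uid} of $p$. Since this is a switching cycle, the \texttt{voter\_output} cannot be the retained output of the previous cycle (whose \texttt{uid} belonged to the old prime), so it must be a fresh measurement from $p$. Consequently $p$ has \texttt{good} health and \texttt{miscomparison\_status} \texttt{not\_miscomparing} in cycle $c_0$, and the ``only if'' direction of R\ref{risky_count_correctness} then forces the \texttt{risky\_count} of $p$ in that cycle to be zero.

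Next, I would apply R\ref{risky_count_correctness} inductively across the following cycles: it guarantees that in every cycle, the \texttt{risky\_count} of an as-yet-not-permanently-faulty unit is either reset to zero or incremented by exactly one. Hence, starting from zero at cycle $c_0$, the \texttt{risky\_count} of $p$ at cycle $c_0 + k$ is at most $k$. In particular, for every $1 \le k < \mathtt{persistence\_lmt}$ it is strictly less than \texttt{persistence\_lmt}, so by R\ref{isolatedifpersistence} the unit $p$ is not isolated at any such cycle. Combining with R\ref{switched_only_if_isolated}, which requires any change of prime in a cycle to be triggered by isolation of the previous prime, no switch away from $p$ can occur during cycles $c_0 + 1, \ldots, c_0 + \mathtt{persistence\_lmt} - 1$. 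Therefore $p$ remains the prime for at least \texttt{persistence\_lmt} consecutive cycles.

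The only delicate point, in my view, is the base case: one must argue carefully about the semantics of the \texttt{voter\_output} on a switching cycle to conclude that it reflects a fresh measurement from the newly chosen prime, rather than a retained value inherited from the previous cycle. Once that is settled, R\ref{healthy} together with R\ref{risky_count_correctness} pins down $\texttt{risky\_count} = 0$ at $c_0$, and the remainder of the proof is a clean counting argument bootstrapped from R\ref{risky_count_correctness}, R\ref{isolatedifpersistence}, and R\ref{switched_only_if_isolated}.
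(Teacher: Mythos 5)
Your proof is correct and follows essentially the same route as the paper's: use R\ref{healthy} to pin down that the newly selected prime has \texttt{risky\_count} zero, R\ref{switched_only_if_isolated} to reduce switching to isolation, and R\ref{risky_count_correctness} together with R\ref{isolatedifpersistence} to lower-bound the number of cycles before isolation can occur. You merely spell out the induction and the base case (that a switching cycle's output is a fresh measurement of the new prime) more explicitly than the paper does.
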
    
\begin{proof}
Consider the time when an input unit $u$ is just selected as the prime unit. By R\ref{healthy}, $u$ has health status \texttt{good} and miscomparison status \texttt{not\_miscomparing}. From R\ref{switched_only_if_isolated}, a new prime unit is selected only when $u$ gets isolated. However, by requirements 
 R\ref{risky_count_correctness} and R\ref{isolatedifpersistence}, 
it must take at least \texttt{persistence\_lmt} cycles
for this to happen. 
\end{proof}

\subsubsection{Validity status related requirements}
The correctness of the voter behaviour (fault identification, isolation and output generation) is guaranteed only if the \textit{maximum permanent fault assumption} is satisfied. The voter maintains a \texttt{validity\_status} to indicate the reliability of the output generated.  The validity status 
can be \texttt{valid}, \texttt{un\_id}(unidentifiable) or \texttt{not\_valid}. 

The validity status \texttt{not\_valid} indicates that the voter output is not reliable because the \textit{maximum permanent fault assumption} does not hold and hence, no meaningful fault identification is possible. As mentioned in the introduction and in section~\ref{sec:prolongisolation}, our voter unit design does a finer fault identification, even if it is incomplete, to delay the situation of the system failing to satisfy the  
\textit{maximum permanent fault assumption}. As a consequence, there could be situations wherein the prime unit of the previous cycle is isolated in the current cycle
and there is no input unit with \texttt{healthy\_data} to be used as the new prime unit, satisfying R\ref{healthy}.  This condition occurs when the fault identification
logic is failing to identify even one input unit with \texttt{healthy\_data} because the condition mentioned in R\ref{not_miscomparing} is not satisfied.
In such scenarios, if \textit{maximum permanent fault assumption} still holds, there is a possibility that the faulty condition is transient and the voter may be able to recover from this condition in a few cycles. Hence, the voter unit can choose to retain the \texttt{voter\_output} of the previous cycle. The \texttt{un\_id} validity status is used to capture this situation. The validity status \texttt{valid} represents a higher level of reliability than \texttt{un\_id}, as we will explain shortly. The requirements regarding validity status are captured below. 
\begin{requirement} \label{not_valid} 
The validity status of the voter is \texttt{not\_valid} if and only if the system does not satisfy the \textit{maximum permanent fault assumption}.
\end{requirement}

\begin{requirement} \label{min_survive}
The validity status of the voter is  \texttt{un\_id} if and only if the \textit{maximum permanent fault assumption} is satisfied and the prime unit of the previous cycle is isolated and no input unit is identified to be providing a \texttt{healthy\_data}. 
\end{requirement} 

\begin{requirement} \label{min_for_dev_for_valid} 
If the system satisfies the \textit{maximum permanent fault assumption} and the \textit{minimum surviving units assumption} (see Definition~\ref{def:min_surviving_units}), then the validity status of the voter must be \texttt{valid}. 
\end{requirement}

\begin{requirement}\label{valid} 
If the \textit{maximum permanent fault assumption} is satisfied, and if at least one unit is identified to be providing a \texttt{healthy\_data}, then the validity status of the voter must be \texttt{valid}. 
\end{requirement}   

\subsubsection{Output age related requirements}
The \texttt{output\_age} is used by the voter to indicate the number of cycles since when the voter output measurement has not been updated. A basic correctness requirement related to \texttt{output\_age} updation is the following:
\begin{requirement}\label{req:age_correctness}
Unless the \texttt{validity\_status} is \texttt{not\_valid}, \texttt{output\_age} is either zero or one more than the previous cycle \texttt{output\_age}. Further, if validity status is \texttt{valid}, then the 
\texttt{output\_age} is equal to the  \texttt{risky\_count} of the prime unit.
\end{requirement}
An upper bound on \texttt{output\_age} is required to ensure that the feed to the controller is a measurement with a bounded lag.  When the validity status is \texttt{valid}, our requirements for the \texttt{output\_age} are as follows. 

\begin{requirement} \label{age_zero} The \texttt{output\_age} is zero if and only if the validity status is \texttt{valid} and voter output is a current cycle measurement.
\end{requirement} 

\begin{requirement} \label{age_valid_atmost}
If the validity status is \texttt{valid}, then $\mathtt{output\_age} \le \mathtt{persistence\_lmt}-1$. 
 \end{requirement} 

It is interesting to note that from the requirements stated so far, an upper bound of $2*(\mathtt{persistence\_lmt} -1)$ is guaranteed for \texttt{output\_age} when the validity status is \texttt{un\_id}. To prove this, we will use a relation that exists between the \texttt{risky\_count} of non isolated input units and the \texttt{output\_age} when the voter validity is other than \texttt{not\_valid}. The relation is stated below.

\begin{claim}\label{clm:age_risky_count_relation}
Suppose all the requirements are satisfied. Then, unless the validity status is \texttt{not\_valid}, for each input unit $u$ which is not isolated, 
$\mathtt{output\_age} - \mathtt{risky\_count} \text{ of } u   < \mathtt{persistence\_lmt}$.
\end{claim}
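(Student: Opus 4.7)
The plan is to prove the claim by induction on the cycle number $t$, taking the inductive statement to be exactly the claim: whenever the validity at cycle $t$ is not \texttt{not\_valid}, the inequality $\mathtt{output\_age}(t) - \mathtt{risky\_count}(u)(t) < \mathtt{persistence\_lmt}$ holds for every non-isolated unit $u$. The base case is handled by the observation that in the initial cycle both counters are $0$, so the bound is trivially satisfied (recalling that $\mathtt{persistence\_lmt} > 1$). For the inductive step I would split on the validity status at $t$.

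The \texttt{valid} case is immediate from R\ref{age_valid_atmost}, which bounds $\mathtt{output\_age}(t) \le \mathtt{persistence\_lmt}-1$; since $\mathtt{risky\_count}(u)(t) \ge 0$, the inequality follows without even using the inductive hypothesis.

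The interesting case is \texttt{un\_id}. I would first establish a short subsidiary lemma that the validity at cycle $t-1$ is also not \texttt{not\_valid}; this is needed in order to invoke the inductive hypothesis, and it follows from the fact that a permanent fault is, by the definition in Section~\ref{sec:unitfaultmodel}, a monotone property of the history of a unit, so the set of units without a permanent fault only shrinks over time, and hence by R\ref{not_valid} the \texttt{not\_valid} status is absorbing. Given this, R\ref{age_zero} forces $\mathtt{output\_age}(t) \ne 0$ (as the validity is not \texttt{valid}), and R\ref{req:age_correctness} then gives $\mathtt{output\_age}(t) = \mathtt{output\_age}(t-1)+1$. For any non-isolated $u$ at $t$, R\ref{isolatedisisolated} ensures that $u$ is also non-isolated at $t-1$, and R\ref{min_survive} tells us that no unit provides a \texttt{healthy\_data} at $t$; hence the non-isolated $u$ must have \texttt{bad} health or a $\mathtt{miscomparison\_status}$ equal to \texttt{miscomparing} or \texttt{maybe\_miscomparing}. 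R\ref{risky_count_correctness} therefore yields $\mathtt{risky\_count}(u)(t) = \mathtt{risky\_count}(u)(t-1)+1$. The two $+1$ increments cancel, so the difference is preserved from cycle $t-1$ to cycle $t$, and the inductive hypothesis closes the argument.

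I expect the main obstacle to be precisely the subsidiary lemma that \texttt{not\_valid} is an absorbing state. It is not stated as an explicit requirement and has to be derived from the semantics of the fault model (a permanent fault, defined as faulty behaviour persisting for $\mathtt{persistence\_lmt}$ consecutive cycles, cannot subsequently be undone) together with R\ref{not_valid}. Once this monotonicity is available, the rest of the inductive step reduces to a careful tracking of the matching $+1$ increments supplied by R\ref{req:age_correctness} and R\ref{risky_count_correctness}.
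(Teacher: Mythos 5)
Your proposal is correct and follows essentially the same route as the paper's proof: an induction over cycles in which the \texttt{valid} case is discharged directly by R\ref{age_valid_atmost}, and the \texttt{un\_id} case is closed by observing that R\ref{min_survive} forces every non-isolated unit's \texttt{risky\_count} to increment (via R\ref{risky_count_correctness}) while R\ref{req:age_correctness} bounds the growth of \texttt{output\_age} by one, so the difference cannot increase. The only cosmetic differences are that the paper obtains the absorbing nature of \texttt{not\_valid} in one line from R\ref{isolatedifpersistence} and R\ref{not_valid} rather than treating it as a separate obstacle, and it does not need R\ref{age_zero} since ``increases by at most one'' already covers the case $\mathtt{output\_age}=0$.
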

\begin{proof}
   We will show that this invariant is satisfied in each cycle if all the requirements (specifically, R\ref{risky_count_correctness}, R\ref{isolatedifpersistence}, R\ref{not_valid}, R\ref{min_survive}, R\ref{req:age_correctness}, and R\ref{age_valid_atmost}) are satisfied. By requirements R\ref{isolatedifpersistence} and R\ref{not_valid}, if the validity status becomes \texttt{not\_valid}, it must remain so in the future. Initially, all units have \texttt{risky\_count} zero and the \texttt{output\_age} is also zero. 
   Since $\mathtt{persistence\_lmt} > 0$, the relation holds initially for all input units.  
   Now, consider a cycle in which the validity status is different from \texttt{not\_valid} after the cycle's fault identification. If the validity status is \texttt{valid}, then by requirement R\ref{age_valid_atmost}, $\mathtt{output\_age} < \mathtt{persistence\_lmt}$ and hence the claim holds. Otherwise, the validity status must be \texttt{un\_id}. In this case, consider an arbitrary input unit $u$ which remains non-isolated after the cycle. It suffices to show that if the relation holds for $u$ before the cycle, it holds after the cycle as well. 
   
   By R\ref{min_survive}, $u$ cannot be providing a \texttt{healthy\_data} in the current cycle. Therefore, by requirement R\ref{risky_count_correctness}, the \texttt{risky\_count} of $u$ must be exactly one more than its value in the previous cycle. Moreover, by R\ref{req:age_correctness}, $\mathtt{output\_age}$ can increase by at most one. Hence, as required, if the relation holds for $u$ before the cycle, it holds after the cycle as well.
\end{proof}
 The following proposition is now easy to prove. 
\begin{proposition}\label{prop:age_not_valid_atmost} 
    If the system satisfies requirements R\ref{risky_count_correctness} to R\ref{age_valid_atmost}, then the $\mathtt{output\_age}$ is at most $2*(\mathtt{persistence\_lmt} -1)$ as long as the \textit{maximum permanent fault assumption} holds.
\end{proposition}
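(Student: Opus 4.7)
The plan is to combine Claim~\ref{clm:age_risky_count_relation} with the isolation bound from R\ref{isolatedifpersistence}, branching on the validity status of the voter in the current cycle. Since the \textit{maximum permanent fault assumption} holds, R\ref{not_valid} immediately tells us that the validity status is not \texttt{not\_valid}, so it must be either \texttt{valid} or \texttt{un\_id}, and Claim~\ref{clm:age_risky_count_relation} applies.

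In the \texttt{valid} case, the bound is essentially free: R\ref{age_valid_atmost} gives $\mathtt{output\_age} \le \mathtt{persistence\_lmt} - 1$, which is at most $2*(\mathtt{persistence\_lmt} - 1)$ whenever $\mathtt{persistence\_lmt} \ge 1$. So the real work lies in the \texttt{un\_id} case. Here I would proceed as follows: first use the \textit{maximum permanent fault assumption} together with $\mathtt{min\_required} \ge \mathtt{max\_simul\_fault} + 1 \ge 1$ to pick some input unit $u$ that does not have a permanent fault. By R\ref{isolatedifpersistence}, a unit is isolated precisely when its \texttt{risky\_count} equals \texttt{persistence\_lmt}, so $u$ is non-isolated and its \texttt{risky\_count} satisfies $\mathtt{risky\_count}(u) \le \mathtt{persistence\_lmt} - 1$. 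Then apply Claim~\ref{clm:age_risky_count_relation} to $u$ to conclude $\mathtt{output\_age} < \mathtt{persistence\_lmt} + \mathtt{risky\_count}(u) \le 2*\mathtt{persistence\_lmt} - 1$, and since both sides are integers and we want a $\le$ bound, rearrange to get $\mathtt{output\_age} \le 2*(\mathtt{persistence\_lmt} - 1)$.

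The main (and really only) obstacle is just making sure the \texttt{un\_id} case even admits a witness unit $u$ to feed into Claim~\ref{clm:age_risky_count_relation}; this requires unpacking the \textit{maximum permanent fault assumption} as ``there exists a non-isolated unit'' using the identification between ``no permanent fault'' and ``not isolated'' given by R\ref{isolatedifpersistence}. Once that is in place, the arithmetic is immediate. I would therefore structure the proof as a short case analysis, stating up front that validity is not \texttt{not\_valid} by R\ref{not_valid}, handling \texttt{valid} in one line via R\ref{age_valid_atmost}, and devoting the bulk of the proof (still only a few lines) to producing the witness $u$ and chaining the two bounds $\mathtt{output\_age} < \mathtt{persistence\_lmt} + \mathtt{risky\_count}(u)$ and $\mathtt{risky\_count}(u) \le \mathtt{persistence\_lmt} - 1$ in the \texttt{un\_id} branch.
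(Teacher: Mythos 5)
Your proof is correct and follows essentially the same route as the paper: use R\ref{not_valid} to rule out \texttt{not\_valid}, exhibit a non-isolated unit from $\mathtt{min\_required} \ge 1$, and chain Claim~\ref{clm:age_risky_count_relation} with the bound $\mathtt{risky\_count} \le \mathtt{persistence\_lmt}-1$ from R\ref{isolatedifpersistence}. Your separate treatment of the \texttt{valid} case via R\ref{age_valid_atmost} is harmless but unnecessary, since the witness-unit argument already covers both the \texttt{valid} and \texttt{un\_id} branches; the paper applies it uniformly.
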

\begin{proof}
By requirement R\ref{not_valid}, unless the validity status is \texttt{not\_valid}, the maximum permanent fault hypothesis holds and so, there are at least $\mathtt{min\_required} > 0$ non-isolated units. Consider one such non-isolated unit $u$. By Claim~\ref{clm:age_risky_count_relation}, $\mathtt{output\_age} - \mathtt{risky\_count} \text{ of } u   \le \mathtt{persistence\_lmt} - 1$. Further, by R\ref{isolatedifpersistence}, the \texttt{risky\_count} of $u$ is at most $\mathtt{persistence\_lmt}-1$.  Hence, $\mathtt{output\_age}  < 2*(\mathtt{persistence\_lmt} -1)$.     
\end{proof}
 \begin{note}
 It is not obvious that the requirements R\ref{risky_count_correctness} to R\ref{age_valid_atmost} are consistent. The consistency is established by 
 machine assisted proof created using Rocq. En route, we also obtained machine generated proofs of the first part of Claim~\ref{clm:partial_identification_miscomparing}, the first part of Claim~\ref{clm:clm:partial_identification_not_miscomparing}, Proposition~\ref{prop:completeness_achieved},  Proposition~\ref{prop:switch_persistence} and  Claim~\ref{clm:age_risky_count_relation}. 
\end{note}
\section{Design of the Voter Unit}\label{sec:design} 
\begin{figure}
 \centering
 \resizebox{\textwidth}{!}{  \begin{tikzpicture}[shorten >=1pt, node distance=13cm, on grid, auto]
    \node[state, initial, minimum size=4cm] (init)   {\parbox{4.2cm}{ \centering {\LARGE $\mathbf{s_0}$}\\ {\Large $\mathtt{isolated\_units =0}$ \\ \texttt{age = 0} \\ \texttt{valid}}}}; 
     
    \node[state, minimum size=3.2cm, below=of init] (validagenonzero)   {\parbox{3.25cm}{ \centering {\LARGE $\mathbf{s_1}$} {\Large\\\texttt{0 < age < persistence\_lmt} \\  \texttt{ valid}}}}; 
    
     \node[state,  minimum size=1cm, right=of validagenonzero] (validagezero)   {\parbox{3.65cm}{  \centering{\LARGE $\mathbf{s_2}$}\\ {\Large$\mathtt{isolated\_units > 0}$ \\ \texttt{age = 0} \\ \texttt{valid}}}};

    \node[state, minimum size=4cm, below=of validagenonzero] (un_id) {\parbox{4.3cm}{  \centering {\LARGE $\mathbf{s_3}$}\\ {\Large\texttt{0 < age < 2*persistence\_lmt-1} \\
     \texttt{un\_id }}}}; 
     
    \node[state, minimum size=4cm, right=of un_id] (not_valid) {\parbox{5cm}{ \centering {\LARGE $\mathbf{s_4}$}\\ {\Large\texttt{age =2*persistence\_lmt} \\
     \texttt{not\_valid}}}}; 
    
    \path[->] 
   (init) edge [loop above] node {\parbox{5.5cm}{ \Large \texttt{prime\_unit is healthy and isolated\_units = 0?\\: update output} }} ()

   (init) edge [bend left=20] node[pos=0.3] {\parbox{6cm}{ \Large{\texttt{prime\_unit is 
    unhealthy?\\: incr age}} }}(validagenonzero)
 
    (validagezero) edge [bend left=10] node[pos=0.45] {\parbox{6.2cm}{\Large{\texttt{prime\_unit is 
    unhealthy\\and not\_isolated?: incr age}}}}(validagenonzero)
    
    (validagenonzero) edge [loop left] node [sloped, pos=0.86] {\parbox{4.1cm}{{\Large $\mathtt{non\_isolated\_units}$ $\mathtt{\ge  min\_required}$ \\ \texttt{prime\_unit is unhealthy and not\_isolated?\\: incr age} }}} ()

     (validagenonzero) edge [bend left=20] node {\parbox{5.2cm}{\Large\texttt{prime\_unit is 
    healthy and }$\mathtt{isolated\_units =0}$\texttt{?\\: reset \texttt{age},\\update output}} }(init)

    (validagenonzero) edge [->, sloped, rotate around={0.1:(validagezero)}] node {\parbox{6cm}{\Large \texttt{prime\_unit is 
    healthy and} $\mathtt{isolated\_units > 0}$ \texttt{?\\ : reset \texttt{age}, update output}}}(validagezero)
     
    (validagenonzero) edge [bend left =60 ] node[pos=0.52]{\parbox{8cm}{\Large \texttt{previous prime\_unit is isolated} and $\mathtt{non\_isolated\_units \ge min\_required}$ \texttt{and} $\mathtt{healthy\_list \ne nil}$ \texttt{?: switch prime\_unit, reset age, update output}}}(validagezero)

 (validagezero) edge [loop right, pos=0.45] node {\parbox{2.2cm}{\Large \texttt{prime\_unit is healthy ?: update output}} } () 

 (validagenonzero) edge [bend left =15] node [pos=0.7] {\parbox{4.1cm}{ \Large $\mathtt{non\_isolated\_units}$ $\mathtt{< min\_required?}$\\\texttt{: set age to 2*persistence\_lmt} }} (not_valid)
 
    (un_id) edge [->, sloped, rotate around={0.1:(validagezero)}] node {\parbox{4.6cm} {\Large {$\mathtt{non\_isolated\_units <}$ $\mathtt{min\_required}$\texttt{?\\: set age to 2*persistence\_lmt}}}} (not_valid)

     (validagenonzero) edge [->, left] node  {\parbox{5cm}{\Large $\mathtt{non\_isolated\_units \ge}$ \texttt{min\_required and} $\mathtt{healthy\_list = nil}$ \texttt{and
  \\  previous prime\_unit is isolated?\\: incr age }}}(un_id)
      
 (un_id) edge [bend left =3, sloped] node [pos=0.3] {\parbox{4.6cm}{\raggedright \Large
    $\mathtt{non\_isolated\_units\ge}$ $\mathtt{min\_required }$ \texttt{and}
     $\mathtt{healthy\_list \ne nil}$\texttt{?\\
    : switch \textit{prime\_unit}, update output}}} (validagezero)

 (un_id) edge [loop below] node {\parbox{7.4cm}{\Large$\mathtt{non\_isolated\_units \ge min\_required}$ 
 \texttt{ and } $\mathtt{healthy\_list = nil}$
 \texttt{ and \\ previous prime\_unit is isolated?\\: incr age}}} ()

 (not_valid) edge [loop below] node {} ()

  (init) edge [bend left=90] node {\parbox{6cm}{\Large {\texttt{prime\_unit is healhy but another unit isolated?\\ : update output}}} }(validagezero)

  (validagezero) edge [bend left] node{\parbox{5cm} {\raggedright \Large {\texttt{prime\_unit is healthy but} $\mathtt{non\_isolated\_units}$ $ < \mathtt{min\_required}$ \texttt{?\\: set age to} $\mathtt{2*persistence\_lmt}$}}}  (not_valid)
        
    ;

\end{tikzpicture}}
\caption{State transition diagram of the voter unit. To avoid cluttering, the update in the isolation status of units and the number of isolated units during the transitions are not shown. The update in validity status is not explicitly shown during the transitions, but can be understood from the states.}
 \label{fig:voter_validity}
\end{figure}
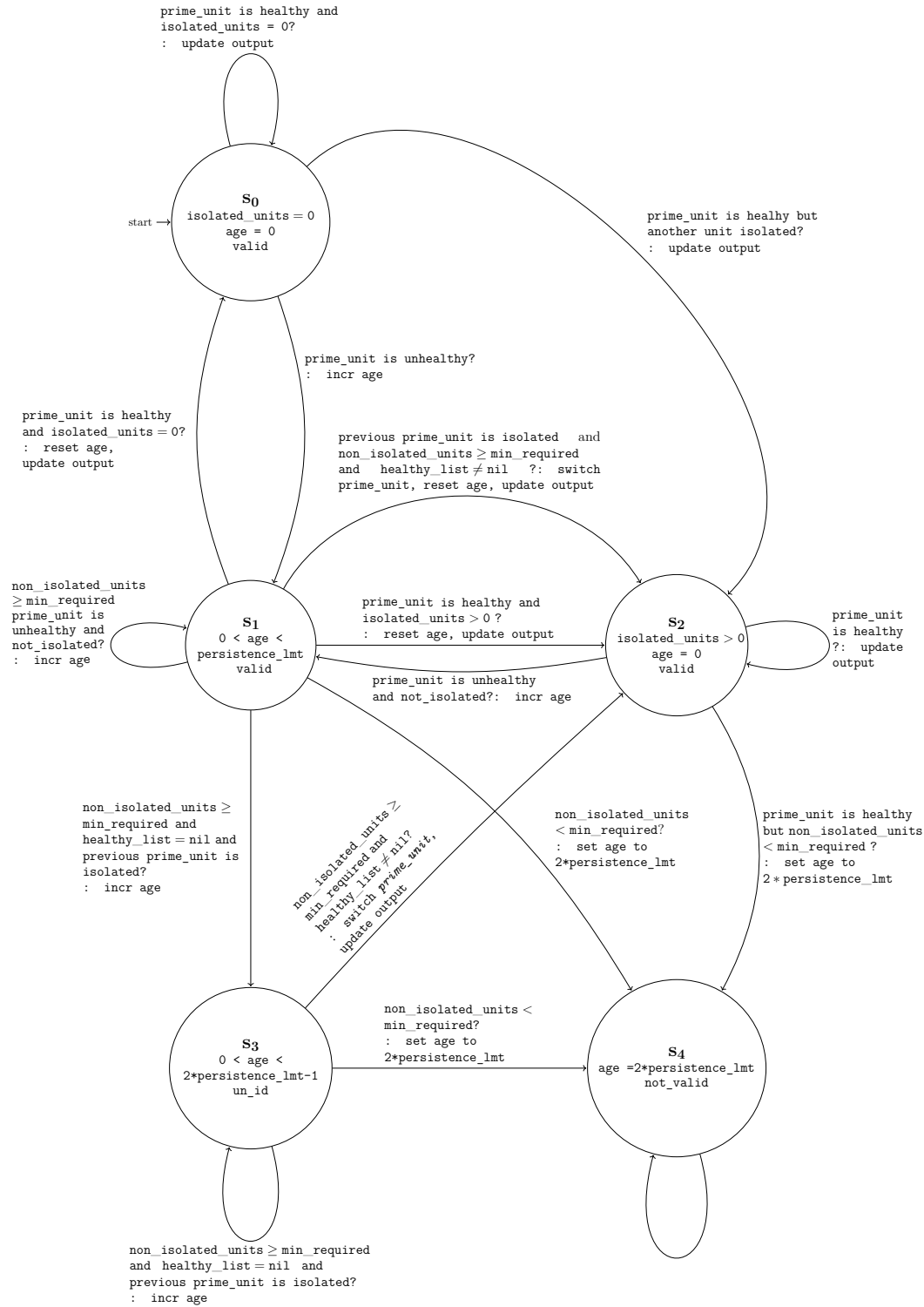

As described in the voter output format, the voter unit provides a \texttt{voter\_output} for the controller along with an indication of the \texttt{validity\_status} and \texttt{output\_age}. The design of the voter unit has three key aspects: fault identification, handling of transient and permanent faults and computation of the voter output. The voter unit is designed as a state transition system. The configurable parameters of the voter unit are:  \texttt{num\_units} to indicate the count of input units, $\delta$ to indicate the noise threshold, \texttt{persistence\_lmt} used for isolation of permanent faulty units and \texttt{max\_simul\_fault} to indicate the maximum number of simultaneous faults that are anticipated in a cycle of operation. 

 In every transition, the voter unit will update the fault status (\texttt{unit\_status}) of each input unit by performing fault identification on the current cycle reading. The \texttt{unit\_status} includes the \texttt{isolation\_status}, \texttt{miscomparison\_status} and the \texttt{risky\_count} of the corresponding input unit.
 The logic used in the fault identification algorithm is an exact translation of the conditions specified in requirements R\ref{miscomparing} and R\ref{not_miscomparing}. If both conditions do not apply, then the unit will be classified as \texttt{maybe\_miscomparing}. This method will satisfy requirements R\ref{maybe_nil}~-~R\ref{not_miscomparing}, as explained in Section~\ref{sec:soundandcomplete} and \ref{sec:prolongisolation}.
After the fault identification, the update in \texttt{risky\_count} is done as described in R\ref{risky_count_correctness}. Using this, the isolation of permanently faulty units is done as mentioned in requirements R\ref{isolatedifpersistence} and R\ref{isolatedisisolated}.

We explain the voter unit behaviour using five abstract states, as shown in the state transition diagram in Figure~\ref{fig:voter_validity}. The description of the states are as follows.

\begin{itemize}
    \item $S_0$: This is the initial state where there are no isolated inputs, the \texttt{output\_age} is zero and the \texttt{validity\_status} is \texttt{valid}. The output value will be the current cycle measurement of the prime unit.
    \item  $S_1$: This state occurs when the prime unit used for \texttt{voter\_output} generation is having a transient fault. The previous cycle \texttt{voter\_output} is retained to satisfy R\ref{healthy}, and \texttt{output\_age} is incremented on transition to this state. 
    \item $S_2$: This state occurs when the number of isolated units is non-zero, but non-isolated units are at least \texttt{min\_required}. There is at least one unit providing \texttt{healthy\_data} and one such unit is selected as the prime unit. Further, the \texttt{output\_age} is zero, indicating that the output value is a current cycle measurement from prime unit, and \texttt{validity\_status} is \texttt{valid}.
    \item $S_3$: This state occurs when the input unit used for \texttt{voter\_output} generation in the previous cycle got isolated and there are no input units providing \texttt{healthy\_data}, even though the number of non-isolated units is at least \texttt{min\_required}. This state satisfies the premise of the requirement R\ref{min_survive} and hence \texttt{validity\_status} is \texttt{un\_id}. The previous cycle \texttt{voter\_output} is retained to satisfy R\ref{healthy}, and \texttt{output\_age} is incremented on transition to this state.
    \item $S_4$: This state occurs only when the number of non-isolated units is below \texttt{min\_required}. In this state, the validity status is $\texttt{not\_valid}$ as per requirement R\ref{not_valid}. 
\end{itemize}
\begin{note}
     It can be verified from the description of states that the requirements R\ref{isolatedifpersistence}, R\ref{healthy}, R\ref{not_valid}, R\ref{min_survive}, R\ref{valid}, R\ref{age_zero}, R\ref{age_valid_atmost} and the second part of requirement R\ref{req:age_correctness} are invariants for all states. 
\end{note}

The state transitions are summarized below.
\begin{itemize}
    \item From the initial state $S_0$, there are three transitions. These are a) to state $S_1$, when the prime unit used for \texttt{voter\_output} generation is having a transient fault and b) to state $S_2$, when the prime unit is healthy and one of the other units is isolated. The system remains in state $S_0$ as long as the initially selected prime unit is healthy and none of the other units are isolated.
    \item From states $S_1, S_2$ and $S_3$, a transition to $S_4$ happens if and only if the \textit{maximum permanent fault assumption} is not satisfied due to isolation of input units. 
    \item From state $S_1$, there are five more transitions. These transitions are a) to state $S_1$ when the prime unit has a transient fault b) to state $S_2$ when the prime unit of the previous cycle is isolated and there is at least one healthy unit to be identified as the prime unit c) to state $S_2$ when the prime unit of the previous cycle recovers from a transient fault and at least one input unit is isolated d) to state $S_0$ when prime unit of previous cycle recovers from a transient fault and none of the input units are isolated e) to state $S_3$ when prime unit of previous cycle is isolated and there are no healthy units to be identified as a prime unit.
    \item From state $S_2$, there is a transition to the same state when the prime unit is healthy. There is also a transition to state $S_1$  when the selected prime unit has a transient fault.
    \item From state $S_3$, there are two  more transitions. These are a) to state $S_3$ when there are no healthy units and b) to state $S_2$ when one of the input units recovers from a transient fault and becomes the prime unit.
    \item Once the system reaches the state $S_4$, it should remain there, in order to satisfy the requirements R\ref{isolatedisisolated} and R\ref{not_valid}.
\end{itemize}
\begin{note}
 It can be seen that the requirements R\ref{risky_count_correctness}~- R\ref{not_miscomparing}, R\ref{isolatedisisolated},  R\ref{switched_only_if_isolated}, R\ref{min_for_dev_for_valid}, and the first part of requirement R\ref{req:age_correctness} are invariants of the state transition rules. 
 \end{note}

\section{Implementation in Rocq}\label{sec:implementationinrocq}
We have used the design given in Section~\ref{sec:design} for our 
Rocq implementation. We have used Rocq's support for higher-order logics and dependent types extensively (see Appendix~\ref{sec:implementation_details}) while formally expressing the requirements mentioned in Section~\ref{sec:functionalrequirements}. 
Our implementation follows a correct-by-construction approach by defining the voter state and the state transition rules in such a way that they carry with them the proof terms of the invariants they satisfy. This makes sure that when these definitions are completed and are verified by Rocq, the invariants will certainly hold by the implementation. 
Since Rocq is based on the Calculus of Construction, all proofs are fully constructable, rather than being existential. The total time taken for verification in Rocq for properties R1 to R16 is approximately 23 seconds on a laptop with Debian 12 OS, 32GB RAM, 64-Bit Processor (Intel® Core™ Ultra 5 125U × 14).

The verified code obtained in Rocq is extractable to OCaml/Haskell/Scheme using the provisions in Rocq~\cite{RocqWebcite2}.  The number of lines of the extracted OCaml code is 362.  The execution time of the extracted OCaml code for a single cycle state transition is around 3 milliseconds.

To get an overview of our implementation using Rocq, the reader may refer to Appendix~\ref{sec:implementation_details}. The definition of the voter unit is given as the record type \texttt{voter\_state}. It may be noticed that the definition of \texttt{voter\_state} carries embedded proof terms of properties corresponding to requirements 
R\ref{healthy}, R\ref{not_valid}, R\ref{min_survive}, R\ref{valid}, R\ref{age_zero}, R\ref{age_valid_atmost} and the second part of requirement R\ref{req:age_correctness}. The requirement R\ref{isolatedifpersistence} is an invariant of \texttt{unit\_status}, which is a field inside the \texttt{voter\_state}. The the state transition function is named as \texttt{voter\_state\_transition}. It carries proof terms of properties corresponding to requirements  R\ref{risky_count_correctness}~- R\ref{not_miscomparing}, R\ref{isolatedisisolated}, R\ref{switched_only_if_isolated}, 
R\ref{min_for_dev_for_valid}, and the first part of requirement R\ref{req:age_correctness}. 
The entire source code of our implementation is available in GitHub\footnote{\url{https://github.com/arifali-ap/Fault_tolerant_voter.git}}.

\section{Discussion and Future Directions}\label{sec:futuredirections}
Our work gives the design and synthesized code of a voter unit that feeds a control system. The synthesized code is correct by construction. The design is parameterized by the number of units and some reliability parameters.  The verified code for the parameterized system is produced in Rocq. The values of the parameters can be appropriately fixed by the user during extraction of the code from Rocq to OCaml/Haskell/Scheme. The support available in Rocq for  higher order logics and dependent types helped in presenting the formal specification of our requirements in a clear and succinct way.  

In fault tolerant systems with a multi-level hierarchical design, redundancy management is required in each level of hierarchy. An extension in this direction is interesting. Further, one may have to deal with fault handling of asynchronous redundant units in such systems.  
For example, in a distributed fault-tolerant system, the control unit may receive feed from multiple asynchronous voting units, each working with independent logical clocks with small deviations.  Each of these voting units may be receiving data from redundant input units. Synthesis of verified code for identifying and handling faults in such asynchronous systems is another direction for future work.

%

\newpage
\appendix
\section{Implementation Details}\label{sec:implementation_details}
In this section we describe the details of our implementation of the voter's operational logic as per the design in the Section~\ref{sec:design}. 

\subsection{Basic Data Types and their Invariants}\label{subsec:basic_data_types}
As mentioned in Section~\ref{sec:systemdescription}, each input unit is assumed to have a \texttt{unit\_id}. This is defined as below. 
\begin{verbatim}
	Inductive unit_id := 
	uid_con : nat -> unit_id.
\end{verbatim}

The  \texttt{signal\_health} type given below is used to represent the self-declared health status of input unit (\texttt{good} or \texttt{bad}).

\begin{verbatim}
Inductive signal_health :=
    | bad  : signal_health
    | good : signal_health.
\end{verbatim}

The \texttt{signal} type given below is used to represent measurement from an input unit (\texttt{val} $\in \mathbb{N)}$ along with a health status (\texttt{hw\_hlth}).
\begin{verbatim}
Record signal := signal_build {
    val     : nat;
    hw_hlth : signal_health;
}.
\end{verbatim}

A \texttt{reading} of type \texttt{signal} and \texttt{uid} of type \texttt{unit\_id} together constitute a \texttt{unit\_output} data type.  As mentioned in Section~\ref{sec:systemdescription}, this represents the information received by the voter from each input unit.
\begin{verbatim}
Record unit_output := unit_output_build {
    reading : signal;    
    uid     : unit_id; 
}.
\end{verbatim}

As explained in the functional requirement, an accumulated status of the fault identification and isolation of each input unit is maintained by the voter. This is represented  using the data type \texttt{unit\_status} defined below. 

\begin{verbatim}
Record unit_status := unit_status_build {
    iso_status     : isolation;
    miscomp_status : miscomparison_status;
    risky_count    : nat;
    pf_risky_count : risky_count <= persistence_lmt /\ 
                    (risky_count = persistence_lmt <-> 
                                            iso_status = isolated);
	}.   
\end{verbatim}

The proof term \texttt{pf\_risky\_count} is included as a field in \texttt{unit\_status} so as to satisfy the requirement R\ref{isolatedifpersistence} as an invariant.

The voter sees the  data type \texttt{unit\_data} from each input unit. This is a record of unit\_output and unit\_status as defined below. 

\begin{verbatim}
Record unit_data := unit_data_build { 
    u_output   : unit_output; 
    u_status   : unit_status;
    pf_healthy : (risky_count u_status) = 0 
                 <-> ( hw_hlth (reading u_output) = good
                    /\(iso_status u_status)     = not_isolated
                    /\(miscomp_status u_status) = not_miscomparing 
                    );
}.  
\end{verbatim}

The proof term \texttt{pf\_healthy} carries the invariant that the \texttt{risky\_count} is zero if and only if the data of the input unit is a \texttt{healthy\_data}.

\subsection{Voter State and its Invariants}\label{subsec:state_invariants}
As mentioned in Section~\ref{sec:systemdescription}, the voter receives input from a predefined number of input units. In our implementation, this is represented using a parameter \texttt{num\_units} $\in \mathbb{N}$.
To indicate that the voter sees inputs from units with unique ids, we define a data type \texttt{uid\_list} which is a list of \texttt{unit\_id},  with an embedded proof that there is no duplication in the list.
\begin{verbatim}
	Record uid_list := uid_list_build {
		     l    : list unit_id;
		     pf_l : NoDup l;
	}.
\end{verbatim}
A list named \texttt{u\_ids} of type \texttt{uid\_list} is the concrete list of  \texttt{unit\_id} defined for the system in which the unit ids correspond to  the sequence of numbers from $1$ to \texttt{num\_units}.

The voter is implemented as a state transition system. States are represented using the \texttt{voter\_state} data type which is defined below.
\begin{verbatim}
Record voter_state := voter_state_build {
    u_data_lst    : list unit_data;
    (* On unit switching, the next unit is selected as per the 
       order in u_data_lst *)
    pf_ud_lst     : get_u_ids_of_snsr_data u_data_lst = u_ids.(l);
    voter_output  : unit_output;
    voter_validity: validity_status;
    output_age    : nat;                          
    presrvd_data  : unit_data;
    pf_v_output   : In (uid voter_output)
                            (get_u_ids_of_snsr_data u_data_lst);
    pf_presrvd_data : presrvd_data.(u_output) = voter_output
                            /\ healthy_data presrvd_data;
    pf_age        : output_age = 0 <->
                                (voter_validity = valid
                                /\ In presrvd_data u_data_lst);
    pf_validity : pf_validity_prop 
                          voter_output voter_validity u_data_lst;
    pf_out_not_isolated : voter_validity = valid ->
                            In (uid voter_output) 
                                (get_u_ids_of_snsr_data
                                non_isolated_list u_data_lst));
    pf_risky_cnt    : risky_cnt_age_prop pf_ud_lst
                                         pf_v_output
                                         output_age
                                         voter_validity;
    pf_age_bound   : voter_validity <> not_valid -> 
                    forall x, 
                     In x u_data_lst
                     -> x.(u_status).(iso_status) = not_isolated 
                     -> output_age - x.(u_status).(risky_count)
                             < persistence_lmt;
    pf_age_validity : (output_age < persistence_lmt 
                              <-> voter_validity = valid)
                      /\ (output_age >= 2*persistence_lmt 
                             <-> voter_validity = not_valid)
}.                
\end{verbatim}

The important fields of \texttt{voter\_state} are as follows. 
\begin{itemize}
    \item \texttt{u\_data\_lst}, which is a list of \texttt{unit\_data}, with one element corresponding to each input unit, to keep the readings and status of the corresponding unit. The proof term \texttt{pf\_ud\_lst} indicates that the list of unit ids of the input units in the \texttt{u\_data\_lst} is the list \texttt{u\_ids}. 
    \item \texttt{voter\_output} of type \texttt{unit\_output} that is assumed to be fed to the controller. The proof term \texttt{pf\_v\_output} states that the \texttt{voter\_output} of the voter is a measurement from one of the input units whose unit id is in the \texttt{u\_data\_lst}.
    \item \texttt{voter\_validity} of type validity\_status that is assumed to be fed to the controller
    \item \texttt{output\_age} $\in \mathbb{N}$ that is assumed to be fed to the controller.
    \item \texttt{presrvd\_data},  which is the \texttt{unit\_data} of the prime unit selected for voter output generation. This information is required by the voter to decide the state transition.  
\end{itemize}

The \texttt{voter\_state} definition also has the following proof terms specifying invariants to satisfy some of the functional requirements described in Section~\ref{sec:functionalrequirements}. The requirement R\ref{isolatedifpersistence} is an invariant of \texttt{unit\_status}, which is defined in Section~\ref{subsec:basic_data_types}. The requirements R\ref{healthy}, R\ref{not_valid}, R\ref{min_survive}, R\ref{valid},R\ref{age_zero}, R\ref{age_valid_atmost} and the second part of requirement R\ref{req:age_correctness} are embedded as invariants using the proof terms in the \texttt{voter\_state} definition.

\begin{itemize}
	 \item  The requirement R\ref{healthy} is captured by the proof term \texttt{pf\_presrvd}.
    
    \item \texttt{pf\_validity} is a direct translation of the requirements R\ref{not_valid}, R\ref{min_survive} and R\ref{valid} relating the fields \texttt{voter\_output}, \texttt{voter\_validity} and the \texttt{u\_data\_lst}. \\   
The translation of  R\ref{not_valid} is done as follows. 
\begin{verbatim}
(count_of_non_isolated_units u_data_lst  < min_required  
      <->  voter_validity = not_valid )
\end{verbatim}

\noindent The translation of  R\ref{min_survive} is as follows.
\begin{verbatim}
/\(( min_required <= count_of_non_isolated_units u_data_lst
    /\ In (uid voter_output) 
     (get_u_ids_of_unit_data (isolated_list u_data_lst))) 
    <-> voter_validity = un_id)
/\ ( voter_validity = un_id -> 
            (healthy_unit_list u_data_lst) = nil)
\end{verbatim}    

\noindent The translation of  R\ref{valid} is as follows.
\begin{verbatim}  
/\ (( min_required <= count_of_non_isolated_units u_data_lst
/\ ((healthy_unit_list u_data_lst) <> nil) )
                    -> voter_validity = valid ).
\end{verbatim}
\item The second part of requirement R\ref{req:age_correctness} is captured by the proof term \texttt{pf\_risky\_cnt}. 
   
The translation of the second part of requirement R\ref{req:age_correctness} is done as follows. 
\begin{verbatim}
voter_validity = valid ->
  let voter_out_data := find_data_of_a_given_unit pf_ud_lst pf_in in
  output_age =  ( risky_count (u_status (proj1_sig voter_out_data)))
\end{verbatim} 

\item   The requirement R\ref{age_zero} is captured by the proof term \texttt{pf\_age} in combination with the proof term \texttt{pf\_presrvd}.

\item The requirement R\ref{age_valid_atmost} and Proposition~\ref{prop:age_not_valid_atmost} are captured by the proof term \texttt{pf\_age\_validity}. 
      
\item \texttt{pf\_age\_bound} is a direct translation of Claim~\ref{clm:age_risky_count_relation} and it is used for proving Proposition~\ref{prop:age_not_valid_atmost}.      
    \item \texttt{pf\_out\_not\_isolated} satisfies the invariant that if the  \texttt{voter\_validity} is \texttt{valid}, then the prime unit is not isolated. This proof term helps in proving that  R\ref{min_survive} is maintained during a state transition. 
	
\end{itemize}

\subsection{Voter State Update Rules and its Properties}\label{subsec:transition_invariants}
Among the requirements, those related to a single \texttt{voter\_state} are embedded in the voter state definition as explained in the previous section. The remaining requirements are about the invariants to be satisfied during each state transition. 

The state transition in each cycle is done by using the \texttt{voter\_state\_transition} function, which takes the previous cycle voter state \texttt{vs} and the current cycle \texttt{unit\_output} of each input unit and generates an updated voter state \texttt{new\_vs}.  The function is designed in such a way that its output is the new voter state along with a proof term \texttt{voter\_state\_transition\_prop} which encodes the remaining requirements. The \texttt{voter\_state\_transition} function has two main steps.
\begin{enumerate}
    \item The \texttt{build\_updated\_u\_data\_lst} function takes as input the \texttt{u\_data\_lst} (which is a list of \texttt{unit\_data}) of the previous cycle voter state \texttt{vs} and the current cycle \texttt{unit\_output} of all input units. It creates a new list of \texttt{unit\_data} denoted by \texttt{new\_p\_ud\_lst} in which the \texttt{unit\_status} of each input unit is updated as per the fault identification and isolation algorithm.
    \item The \texttt{unit\_status} of input units in \texttt{new\_p\_ud\_lst} is used as the \texttt{u\_data\_lst} of the new voter state \texttt{new\_vs}. Based on the updated isolation status of input units available in \texttt{new\_p\_ud\_lst}, the assessment of whether the system satisfies \textit{maximum permanent fault assumption} is done. Based on this, the remaining fields of the new voter state \texttt{new\_vs} are generated.
\end{enumerate}
\noindent We will now describe the details of the steps given above. 
\paragraph{Creating the updated \texttt{unit\_data} list \texttt{new\_p\_ud\_lst}. }

To begin with, fault identification has to be done. For this, a filtered list
     of \texttt{unit\_output}, namely\\
     \texttt{all\_good\_non\_iso\_lst}, is created based on the previous cycle voter state \texttt{vs} and the list of current cycle \texttt{unit\_output}. 

This is done so as to filter out units previously isolated as per \texttt{vs} and units with self identifying health status \texttt{bad} in the current cycle, so that they are not used for identifying deviation faults. From \texttt{all\_good\_non\_iso\_lst}, two sub-lists of unit ids are computed. The list of unit ids which satisfy the premise of requirement R\ref{miscomparing} are identified as \texttt{dev\_uid\_lst}. The list of unit ids which are not included in \texttt{dev\_uid\_lst} and which do not satisfy the premise of requirement R\ref{not_miscomparing} are identified as \texttt{maybe\_uid\_lst}.

The next step is to use the lists computed above to build the updated list of \texttt{unit\_data}, called \texttt{new\_p\_ud\_lst}. The current cycle \texttt{unit\_output} available as input to\\ \texttt{build\_updated\_u\_data\_lst} is copied to the \texttt{unit\_output} of the corresponding element of \texttt{new\_p\_ud\_lst}. Now, the \texttt{unit\_status} of each input unit of \texttt{new\_p\_ud\_lst} needs to be computed from the \texttt{unit\_status} of the corresponding input unit of the \texttt{u\_data\_lst} of the previous cycle voter state \texttt{vs}. This is done as follows. 

If the isolation status in the \texttt{u\_data\_lst} of \texttt{vs} is \texttt{isolated}, the isolation status in \texttt{new\_p\_ud\_lst} is \texttt{isolated}. The other fields in \texttt{unit\_status} of \texttt{new\_p\_ud\_lst} remain as in \texttt{u\_data\_lst} of \texttt{vs}. If the isolation status in \texttt{u\_data\_lst} of \texttt{vs} is \texttt{not\_isolated}, then the following rules are applied.
         \begin{itemize}
         \item If the self identified health status in current cycle \texttt{unit\_output} is  \texttt{bad}, then \texttt{risky\_count} in \texttt{new\_p\_ud\_lst} is one more than that in \texttt{u\_data\_lst} of the voter state \texttt{vs}. 
        
         \item If the self identified health status in current cycle \texttt{unit\_output} is \texttt{good} and if the \texttt{uid} is included in \texttt{dev\_uid\_lst} (resp.~in \texttt{maybe\_uid\_lst}), the miscomparison status in \texttt{new\_p\_ud\_lst} is set to \texttt{miscomparing} (resp. is set to ~\texttt{maybe\_miscomparing}) and the \texttt{risky\_count} in \texttt{new\_p\_ud\_lst} is one more than that in \texttt{u\_data\_lst} of \texttt{vs}. This is required to meet the second part of the requirement R\ref{risky_count_correctness}.
         
        \item  If the self identified health status in current cycle \texttt{unit\_output} is \texttt{good} and if the \texttt{uid}  is not included in \texttt{dev\_uid\_lst} and in \texttt{maybe\_uid\_lst},  the miscomparison status in \texttt{new\_p\_ud\_lst} is \texttt{not\_miscomparing} and the \texttt{risky\_count} in \texttt{new\_p\_ud\_lst} is set to zero.
        \item  When the \texttt{risky\_count} after updation reaches \texttt{persistence\_lmt}, the isolation status of that unit in \texttt{new\_p\_ud\_lst} is  set to \texttt{isolated}. 
         \end{itemize} 
         
It is proved that the above rules of creation of \texttt{new\_p\_ud\_lst} guarantee that requirements R\ref{risky_count_correctness} to R\ref{not_miscomparing} and R\ref{isolatedisisolated} are satisfied by the fault identification and isolation algorithm. Since \texttt{new\_p\_ud\_lst} is used as the \texttt{u\_data\_lst} of \texttt{new\_vs}, these requirements will also be satisfied by the voter state transition.

\paragraph{Voter Output Generation using \texttt{new\_p\_ud\_lst}. }
The new voter state \texttt{new\_vs} uses  \texttt{new\_p\_ud\_lst} created above as its \texttt{u\_data\_lst}. The other fields of \texttt{new\_vs}, such as \texttt{validity\_status}, \texttt{voter\_output} and \texttt{output\_age} are defined based on the updated information available in \texttt{new\_p\_ud\_lst}.

Using the count of units with isolation status as \texttt{not\_isolated} in \texttt{new\_p\_ud\_lst}, the decision whether the system satisfies the \textit{maximum permanent fault assumption} is taken. If this count
is less than \texttt{min\_required}, then as per requirement R\ref{not_valid}, \texttt{new\_vs} has  \texttt{validity\_status} as \texttt{not\_valid}. In this case, the \texttt{output\_age} is set as \texttt{2*persistence\_lmt} and the remaining fields are unchanged from those of previous \texttt{voter\_state} \texttt{vs}.  If the count is greater than or equal to \texttt{min\_required}, the rules used are the following. 

As per \texttt{new\_p\_ud\_lst}, if the new isolation status 
    of the unit used in the \texttt{voter\_output} of the previous state  \texttt{vs} is \texttt{isolated}, then a healthy data list is computed from \texttt{new\_p\_ud\_lst}, which is the list of its elements with isolation status as \texttt{not\_isolated}, self identified health status \texttt{good} and miscomparison status as \texttt{not\_miscomparing}. 
     \begin{itemize}
       \item If the healthy data list is non-empty, then to be consistent with the requirement R\ref{valid},  the \texttt{validity\_status} of \texttt{new\_vs} is set to \texttt{valid}. The \texttt{unit\_output} and \texttt{unit\_data} from the first element in the list are respectively used as   
       \texttt{voter\_output} and \texttt{presrvd\_data} of \texttt{new\_vs}. The \texttt{output\_age} is set as zero. 
        \item If the healthy data list is empty, then to be consistent with the requirement R\ref{min_survive}, the \texttt{validity\_status} of \texttt{new\_vs} is set to \texttt{un\_id}. The \texttt{output\_age} is one more than that of \texttt{vs}. The remaining fields are unchanged from those of previous \texttt{voter\_state} \texttt{vs}.
\end{itemize}

     As per \texttt{new\_p\_ud\_lst}, if the new isolation status 
    of the unit used in the \texttt{voter\_output} of the previous state  voter state \texttt{vs} is \texttt{not\_isolated}, then the \texttt{validity\_status} of \texttt{new\_vs} is set as \texttt{valid}, to be consistent with the requirement R\ref{valid}. 

    \begin{itemize}
    \item If as per \texttt{new\_p\_ud\_lst}, that unit is providing a healthy data, then the \texttt{voter\_output} and \texttt{presrvd\_data} of \texttt{new\_vs} are build using the \texttt{unit\_data} of that unit, available in \texttt{new\_p\_ud\_lst} and the \texttt{output\_age} is set to zero. 
    \item Otherwise, the \texttt{voter\_output} and \texttt{presrvd\_data} of \texttt{new\_vs} are same as they were in the previous voter state \texttt{vs}. The \texttt{output\_age} is one more than that of the previous voter state \texttt{vs}.
    \end{itemize}

\paragraph{\textbf{Invariants Maintained during State Transitions.}}
As mentioned earlier, the state transition function is \texttt{voter\_state\_transition}. It takes the previous cycle voter state \texttt{vs} and the current cycle \texttt{unit\_output} of each input unit and generates a new voter state \texttt{new\_vs}, which satisfies a proof term \texttt{voter\_state\_transition\_prop}. The property \texttt{voter\_state\_transition\_prop} is an encoding of requirements R\ref{risky_count_correctness} -  R\ref{not_miscomparing},
 R\ref{isolatedisisolated}, R\ref{switched_only_if_isolated}  and R\ref{min_for_dev_for_valid} and first part of requirement R\ref{req:age_correctness}.

 We define a property \texttt{simul\_fault\_prop}, which is the translation of the \textit{Simultaneous Fault Hypothesis} of the system. The property means  the following:\\
 The number of non-isolated units with \texttt{bad} health $+$ the number of non-isolated units with \texttt{good} health and having at least $\delta$ deviation from ground truth is  $\le \mathtt{simul\_max\_fault}$.

\noindent The requirement R\ref{risky_count_correctness} is  translated as below.
\begin{verbatim}
 (  forall x, In x (u_data_lst vs)
    -> forall y, In y (u_data_lst new_vs)
    -> (uid (u_output x) = uid (u_output y))
    -> risky_cnt_prop x y )
\end{verbatim}

where, the property \texttt{risky\_cnt\_prop} is defined as the following.
\begin{verbatim}
Definition risky_cnt_prop   (old : unit_data) (new :  unit_data) :=
  iso_status old.(u_status) = not_isolated ->
  (risky_count new.(u_status) = 0  
  \/ risky_count new.(u_status) = S (risky_count ( old).(u_status)))
        /\ ( (risky_count new.(u_status) = S(risky_count(old).(u_status))) 
             <-> (miscomp_status (u_status new) <> not_miscomparing 
                 \/ hw_hlth (reading (u_output new)) = bad)    )
     
\end{verbatim}
\noindent The first part of requirement R\ref{maybe_nil} is translated as given below.
\begin{verbatim}
(simul_fault_prop (pf_l u_ids)  pf_all_unit_outputs (pf_ud_lst vs)
    -> let mis_flt_lmt := flt_lmt_among_good
             u_ids.(pf_l) pf_all_unit_outputs (pf_ud_lst vs) in
    let l := proj1_sig (all_good_non_iso_lst
             u_ids.(pf_l) pf_all_unit_outputs (pf_ud_lst vs) ) in
        length l > 2*mis_flt_lmt
            -> forall x, In x (u_data_lst vs)
            -> forall y, In y (u_data_lst new_vs)
            -> (uid (u_output x) = uid (u_output y))                   
            -> iso_status (u_status x) = not_isolated
            -> y.(u_status).(miscomp_status) <> maybe_miscomparing )
\end{verbatim}

\noindent The completeness part of requirement R\ref{maybe_nil} is translated as below.
\begin{verbatim}
( simul_fault_prop (pf_l u_ids)  pf_all_unit_outputs (pf_ud_lst vs)
    -> min_required <= count_of_non_isolated_units vs.(u_data_lst)
    -> forall x, In x (u_data_lst vs)
    -> forall y, In y (u_data_lst new_vs)
    ->  forall z, In z all_unit_outputs
    -> (uid (u_output x) = uid (u_output y))
    -> (uid (u_output x) = uid z)  
    -> iso_status (u_status x) = not_isolated
    
    ->( let mis_flt_lmt := flt_lmt_among_good
    u_ids.(pf_l) pf_all_unit_outputs (pf_ud_lst vs) in
    let l := proj1_sig(all_good_non_iso_lst
    u_ids.(pf_l) pf_all_unit_outputs (pf_ud_lst vs) ) in
    length l > 2*mis_flt_lmt
    
     -> z.(reading).(hw_hlth) = good 
     -> adiff ground_truth z.(reading).(val) > 3*delta
     -> y.(u_status).(miscomp_status) = miscomparing )
   
    /\( let mis_flt_lmt := flt_lmt_among_good
    u_ids.(pf_l) pf_all_unit_outputs (pf_ud_lst vs) in
    let l := proj1_sig (all_good_non_iso_lst
     u_ids.(pf_l) pf_all_unit_outputs (pf_ud_lst vs) ) in
    length l > 2*mis_flt_lmt
    
    -> In z l
    -> adiff ground_truth z.(reading).(val) <= delta
    -> y.(u_status).(miscomp_status) = not_miscomparing)

\end{verbatim}

\noindent The requirement R\ref{always_sound} is translated as below.
\begin{verbatim}
( simul_fault_prop (pf_l u_ids)  pf_all_unit_outputs (pf_ud_lst vs)
    -> min_required <= count_of_non_isolated_units vs.(u_data_lst))
    -> forall x,In x (u_data_lst vs)
    -> forall y,In y (u_data_lst new_vs)
    ->  forall z, In z all_unit_outputs
    -> (uid (u_output x) = uid (u_output y))
    -> (uid (u_output x) = uid z)  
    -> iso_status (u_status x) = not_isolated
    -> ( (* soundness a prop *)
     y.(u_status).(miscomp_status) = miscomparing
    -> adiff ground_truth z.(reading).(val)    > delta )
    
       (* soundness B prop *)
    /\  ( let  l := proj1_sig (all_good_non_iso_lst (pf_l u_ids)
        pf_all_unit_outputs (pf_ud_lst vs)) in
    In z l -> y.(u_status).(miscomp_status) = not_miscomparing
    -> adiff ground_truth z.(reading).(val) <= 3*delta )
\end{verbatim}

The requirement R\ref{miscomparing} is translated as below.
\begin{verbatim}
    In x (u_data_lst vs)
    -> forall y,In y (u_data_lst new_vs)
    ->  forall z,In z all_unit_outputs
    -> (uid (u_output x) = uid (u_output y))
    -> (uid (u_output x) = uid z)  
    -> iso_status (u_status x) = not_isolated
    -> 
  ( let mis_flt_lmt := flt_lmt_among_good
    u_ids.(pf_l) pf_all_unit_outputs (pf_ud_lst vs) in
    let l := proj1_sig (all_good_non_iso_lst
    u_ids.(pf_l) pf_all_unit_outputs (pf_ud_lst vs) ) in
    In z l
    -> miscomparing_many_check l mis_flt_lmt z = true
    -> y.(u_status).(miscomp_status) = miscomparing )
\end{verbatim}

\indent The requirement R\ref{not_miscomparing} is translated as below.
\begin{verbatim}
    In x (u_data_lst vs)
    -> forall y, In y (u_data_lst new_vs)
    ->  forall z, In z all_unit_outputs
    -> (uid (u_output x) = uid (u_output y))
    -> (uid (u_output x) = uid z)  
    -> iso_status (u_status x) = not_isolated
    -> (let mis_lst  := 
                proj1_sig(miscomparing_lst
                (pf_l u_ids) pf_all_unit_outputs(pf_ud_lst vs) ) in
        let mis_flt_lmt := flt_lmt_among_good  (pf_l u_ids)
                            pf_all_unit_outputs(pf_ud_lst vs) in
        let rem_mis_flt_lmt  := mis_flt_lmt - length ( mis_lst ) in
        let gd_non_iso_lst   := proj1_sig (all_good_non_iso_lst
                    (pf_l u_ids)pf_all_unit_outputs(pf_ud_lst vs) )in
        let negb_mis_lst     := filter(fun y =>
        negb (miscomparing_many_check 
                gd_non_iso_lst mis_flt_lmt y ) )gd_non_iso_lst in 
        In z negb_mis_lst
         -> agreeing_many_check negb_mis_lst rem_mis_flt_lmt z = true
         -> y.(u_status).(miscomp_status) = not_miscomparing )
   ).
\end{verbatim}
\noindent The requirement R\ref{isolatedisisolated} is translated as below.
\begin{verbatim}
  ( forall x, In x (get_u_ids_of_unit_data   
              (isolated_list (u_data_lst vs))) -> 
              In x (get_u_ids_of_unit_data   
                        (isolated_list (u_data_lst new_vs))))
\end{verbatim}
\noindent The requirement R\ref{switched_only_if_isolated}  is described here. 
\begin{verbatim}
 ( (uid (voter_output vs)) <> (uid (voter_output new_vs))
    -> In (uid(voter_output vs)) 
    (get_u_ids_of_unit_data (isolated_list (u_data_lst new_vs))))
\end{verbatim}
\noindent The requirement R\ref{min_for_dev_for_valid} is translated as below.
\begin{verbatim}
  (simul_fault_prop (pf_l u_ids)  pf_all_unit_outputs (pf_ud_lst vs)
     -> count_of_non_isolated_units (u_data_lst new_vs) >= min_required
     -> let mis_flt_lmt := flt_lmt_among_good
     u_ids.(pf_l) pf_all_unit_outputs (pf_ud_lst vs) in
     let l := proj1_sig (all_good_non_iso_lst 
            u_ids.(pf_l) pf_all_unit_outputs (pf_ud_lst vs) ) in
     length l > 2*mis_flt_lmt
     -> voter_validity new_vs = valid )
\end{verbatim}

\noindent The first part of requirement R\ref{req:age_correctness} is proved using the proof term described. 
\begin{verbatim}
  ( voter_validity new_vs <> not_valid
   -> (  output_age new_vs = 
      S (output_age vs) \/ output_age new_vs = 0 ))
\end{verbatim}
\noindent It may be recalled that the other requirements are maintained as invariants of \texttt{voter\_state}, as mentioned in Section~\ref{subsec:state_invariants}.  
\end{document}